\tikzstyle{morphism}=[font=\small,morphismshape]
\tikzstyle{box}=[rectangle,inner sep=.4ex, draw=black, node on layer=foreground]
\tikzstyle{groundtwo} = [semicircle, draw=black, fill=white,scale=0.8, node on layer=foreground]
\tikzstyle{groundthree} = [semicircle, draw=black, fill=black,scale=0.8, node on layer=foreground]
\newif\ifvflip\pgfkeys{/tikz/vflip/.is if=vflip}
\newif\ifhflip\pgfkeys{/tikz/hflip/.is if=hflip}
\newif\ifhvflip\pgfkeys{/tikz/hvflip/.is if=hvflip}
\newlength\morphismheight
\newlength\wedgewidth
\tikzset{width/.initial=1mm}
\newlength\minimummorphismwidth
\newlength\stateheight
\newlength\minimumstatewidth
\newlength\connectheight
\tikzset{colour/.initial=white}
\tikzstyle{mixed}=[line width=.7pt]
\tikzstyle{pure}=[line width=.7pt]
\tikzset{diredge/.style={decoration={
  markings,
  mark=at position 0.525 with {\arrow{#1}}},postaction={decorate}}}
\tikzset{
    diredge/.default=>
}
\tikzset{diredgestart/.style={decoration={
  markings,
  mark=at position 4pt with {\arrow{#1}}},postaction={decorate}}}
\tikzset{
    diredgestart/.default=<
}
\tikzset{diredgeend/.style={decoration={
  markings,
  mark=at position 1 with {\arrow{#1}}},postaction={decorate}}}
\tikzset{
    diredgeend/.default=>
}
    \gdef\node@@on@layer{%
      \setbox\tikz@tempbox=\hbox\bgroup\pgfonlayer{#1}\unhbox\tikz@tempbox\endpgfonlayer\egroup}
\def\node@on@layer{\aftergroup\node@@on@layer}
\newcommand{\lift}[1][1]{\mathop{
\smash{\raisebox{0pt}{\hspace{-1pt}\ensuremath{\begin{pic}[scale=0.2*#1]
   \draw (0,0) to (1,0) to (1,1) to (0,1) to (0,0) to (1,1);
\end{pic}
}}}}}
\newcommand{\tinyground}[1][ground]{
\smash{\raisebox{-2pt}{\hspace{-3pt}\ensuremath{\begin{pic}[scale=0.4]
    \node[#1, scale=0.6] (1) at (0,0.4) {};
    \draw [pure] (1.south) to +(0,-.3);
\end{pic}
}}}}
\newcommand{\tinymix}[1][ground]{
\smash{\raisebox{-2pt}{\hspace{-3pt}\ensuremath{\begin{pic}[scale=-0.4]
    \node[#1, scale=-0.6] (1) at (0,0.4) {};
    \draw [pure] (1.south) to +(0,-.3);
\end{pic}
}}}}
\newenvironment{pic}[1][]
{\begin{aligned}\begin{tikzpicture}[font=\tiny,#1]}
{\end{tikzpicture}\end{aligned}}
\newcommand{\cat}[1]{\ensuremath{\mathbf{#1}}}
\newcommand{\id}[1][]{\ensuremath{\mathrm{id}_{#1}}}
\newcommand{\llift}[1]{\ensuremath{^{\lift[.7]}\!\!{#1}}}
\newcommand{\rlift}[1]{\ensuremath{{#1}^{\,\lift[.7]}}}
\newcommand{\Llift}[1]{\ensuremath{^{\oslash}\!{#1}}}
\newcommand{\Rlift}[1]{\ensuremath{{#1}^{\oslash}}}
\theoremstyle{plain}
\newtheorem{theorem}{Theorem}
\newtheorem{lemma}[theorem]{Lemma}
\theoremstyle{definition}
\newtheorem{definition}[theorem]{Definition}
\newtheorem{example}[theorem]{Example}
\title{Purity through Factorisation}
\author{Oscar Cunningham
    \institute{University of Oxford}
    \thanks{Supported by EPSRC Studentship OUCL/2014/OAC.}
    \email{oscar.cunningham@cs.ox.ac.uk}
  \and Chris Heunen 
    \institute{University of Edinburgh}
    \thanks{Supported by EPSRC Fellowship EP/L002388/1.}
    \email{chris.heunen@ed.ac.uk}
}
\begin{document}
\maketitle
\begin{abstract}
  We give a construction that identifies the collection of pure processes (\textit{i.e.}\ those which are deterministic, or without randomness) within a theory containing both pure and mixed processes. Working in the framework of symmetric monoidal categories, we define a  pure subcategory. This definition arises elegantly from the categorical notion of a weak factorisation system. Our construction gives the expected result in several examples, both quantum and classical. 
\end{abstract}

\section{Introduction}

Categorical quantum mechanics models physical theories as symmetric monoidal categories: objects are interpreted as physical systems, and morphisms are interpreted as processes that take a state of one system to a state of another~\cite{abramskycoecke:categoricalsemantics,coeckekissinger:cqm,heunenvicary:cqm}.
This approach captures various physical theories uniformly:
\begin{itemize}
\item functions between finite sets, which may be interpreted as deterministic processes between systems with finitely many states;
\item relations between finite sets, which may be interpreted as nondeterministic processes between systems with finitely many states;
\item stochastic matrices (those matrices with entries in the nonnegative reals, whose columns sum to 1), which may be interpreted as probabilistic processes between systems with finitely many states;
\item completely positive maps between finite-dimensional Hilbert spaces, which may be interpreted as quantum processes between systems with finitely many degrees of freedom.
\end{itemize}
In these interpretations, some, but not all, processes are \emph{pure}, in the sense that probability plays no role:
\begin{itemize}
\item a function is always pure;
\item a relation is pure when it is (the graph of) a partial function;
\item a stochastic matrix is pure when each column has one nonzero entry;
\item a completely positive map is pure when it needs no ancilla.
\end{itemize}
General morphisms are interpreted as a \emph{mixture} of pure ones. The goal of this article is to give an \emph{operational definition} of purity.
In contrast to other approaches~\cite{chiribella:pure,chiribellascandolo:diagonal,chiribella:dilation,chiribelladarianoperinotti:purification,coeckeperdrix:environment,cunninghamheunen:cpstar}, our definition merely needs the structure of symmetric monoidal categories. That is, we define what it means for a morphism to be pure using only its relationships with the other morphisms in the category (so without daggers or dual objects), and without any reference to the interpretation of morphisms as being processes between systems. 
The key idea is that \emph{purification} may be regarded as part of a \emph{factorisation system}.
To evidence that our definition is useful, we prove that in the above four example cases it recaptures the desired interpretation.

We start by recalling factorisation systems in Section~\ref{sec:factorisation}, and discuss lifting properties in Section~\ref{sec:lifting}. Section~\ref{sec:purity} then defines purification in terms of factorisation systems, and Section~\ref{sec:examples} illustrates that this is a good notion by showing that it captures probabilistic theories, possibilistic theories, deterministic theories, and quantum theories. Finally, Section~\ref{sec:related} gives a comparison with previous attempts at operational definitions of purity~\cite{chiribella:pure,selbycoecke:leaks}. We will use the diagrammatic notation~\cite{selinger:graphicallanguages}.

\section{Factorisation}\label{sec:factorisation}

Our definition of purity is inspired by Stinespring's dilation theorem~\cite{stinespring}. This is a theorem about the category of finite dimensional Hilbert spaces and completely positive maps~\cite{choi}; Stinespring's statement allowed infinite dimensional spaces, but we don't use that generality. Hilbert spaces represent quantum systems and completely positive maps represent all the physically realisable nondeterministic processes between them~\cite{selinger:cpm}. The pure morphisms representing deterministic processes form a symmetric monoidal subcategory. Additionally, every system has a has a \emph{completely mixed state} represented by a completely positive map from $\mathbb C$ (the monoidal unit) to the Hilbert space representing that system~\cite{coeckeperdrix:environment,cunninghamheunen:cpstar}. We will draw this state diagrammatically as \tinymix.

\begin{theorem}[Stinespring dilation]\label{thm:stinespring}
  Any completely positive map $A\to B$ can be written in the form
  \begin{equation}\label{eq:stinespring}
    \begin{pic}[xscale=.75,yscale=.5]
      \node[morphism] (a) at (0,.5) {$\;p\;$};
      \draw (a.north) to +(0,.5)node[above]{B};
      \draw (a.south west) to +(0,-.9)node[below]{A};
      \draw (a.south east) to +(0,-.3)node[right]{C} to +(0,-.6)node[ground,scale=-.75]{};
    \end{pic}
  \end{equation}
  for some \emph{ancilla} Hilbert space $C$ and pure morphism $p$. 
\end{theorem}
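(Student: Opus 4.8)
The plan is to reduce this to the classical Stinespring dilation theorem for completely positive maps between $C^*$-algebras (here, finite-dimensional matrix algebras), and then to translate the output into the diagrammatic language of the paper. First I would recall the standard statement: given a completely positive map $\Phi\colon \mathcal{B}(\mathcal H_A)\to\mathcal B(\mathcal H_B)$ (written here in the Heisenberg picture, or its adjoint in the Schr\"odinger picture), there exist a Hilbert space $K$ and a linear map $V\colon \mathcal H_B\to \mathcal H_A\otimes K$ such that $\Phi$ factors through conjugation by $V$ followed by a partial trace over $K$. In finite dimensions one may take $\dim K\le \dim\mathcal H_A\cdot\dim\mathcal H_B$, so the ancilla $C$ with $\mathcal H_C=K$ is a legitimate object of the category. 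The map $V$ (together with its adjoint) defines the morphism we want to call $p$.

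The second step is to argue that $p$, the completely positive map induced by conjugation by $V$ (\emph{i.e.}\ $\rho\mapsto V\rho V^{\dagger}$ or dually $x\mapsto V^{\dagger}xV$), is \emph{pure} in the relevant sense: it is of the form ``conjugate by a single Kraus operator'', which is exactly the condition ``needs no ancilla'' described in the introduction. Concretely, $p$ lies in the image of the canonical functor from the category of Hilbert spaces and linear maps into the category of completely positive maps, and such morphisms are precisely the pure ones; this is the characterisation the paper is using. I would then observe that partial trace over $C$ is exactly the effect drawn as a ground symbol (the discarding/environment map) composed on the $C$ wire, which gives the diagram in~\eqref{eq:stinespring}: feed $A$ into $p$, read off $B$, and discard $C$.

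The main obstacle — and really the only nontrivial content — is the classical Stinespring theorem itself, which I would not reprove; the GNS-style construction (build $\mathcal H_A\otimes K$ from the Kraus decomposition $\Phi(\cdot)=\sum_i K_i(\cdot)K_i^{\dagger}$, stacking the $K_i$ into a single isometry-like $V$) is standard and can be cited. The remaining care is bookkeeping: making sure the variance of the wires matches (Schr\"odinger versus Heisenberg picture), that the ancilla is finite-dimensional so it is an object of our category, and that the ground symbol \tinymix\ or the discarding effect is being used consistently with the paper's conventions. Once the dictionary between ``single Kraus operator'' and ``pure morphism'' is in place, the diagrammatic equation~\eqref{eq:stinespring} is just a restatement of the operator identity $\Phi = \mathrm{tr}_C \circ (V(-)V^{\dagger})$, so no further calculation is needed.
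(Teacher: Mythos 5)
Your overall strategy coincides with the paper's: the paper's ``proof'' is literally a citation to Selinger's CPM construction, and you likewise defer the classical Stinespring/Kraus theorem to the literature and concentrate on the dictionary into string diagrams. That part is fine, as is your observation that ``pure'' here means ``in the image of the canonical functor $\cat{FHilb}\to\cat{Quant}$, i.e.\ conjugation by a single Kraus operator'' --- that is exactly the characterisation the paper later establishes in Lemma~\ref{lem:quantum:pure}.

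However, your final translation lands on the wrong diagram. The ground symbol in~\eqref{eq:stinespring} is drawn with a negative scale, i.e.\ it is the completely mixed \emph{state} $\tinymix$ fed into the ancilla \emph{input} $C$ of $p$, not the discarding \emph{effect} $\tinyground$ applied to an ancilla output. So~\eqref{eq:stinespring} asserts $\Phi(\rho)=W(\rho\otimes\mathrm{id}_C)W^\dagger$ for a single operator $W\colon \mathcal H_A\otimes C\to\mathcal H_B$, whereas the identity you write down, $\Phi=\mathrm{tr}_C\circ\bigl(V(-)V^\dagger\bigr)$ with $V\colon\mathcal H_A\to\mathcal H_B\otimes C$, is the dual ``discard the environment'' form --- in the paper's terminology that is \emph{co}purification, equation~\eqref{eq:simplediscarding}. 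You flag the $\tinymix$-versus-$\tinyground$ issue as ``bookkeeping'' but then resolve it the wrong way and declare that no further calculation is needed. A further (short) calculation \emph{is} needed: either invoke the self-duality of $\cat{Quant}$ to pass between the two forms, or argue directly from the Kraus decomposition $\Phi(\rho)=\sum_{i=1}^n K_i\rho K_i^\dagger$ by setting $C=\mathbb C^n$ and $W=\sum_i K_i\otimes\langle i|$, so that $W(\rho\otimes\mathrm{id}_C)W^\dagger=\sum_i K_i\rho K_i^\dagger$. Either fix is a few lines, but as written the proposal proves the statement of copurification rather than the theorem as drawn.
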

\begin{proof}
  See~\cite{selinger:cpm}.
\end{proof}
  
We can reformulate the Stinespring dilation theorem (and theorems like it) in a category-theoretic way by using a version of factorisation systems~\cite{freydkelly:factorisation,adamekherrlichstrecker:joyofcats}. 

\begin{definition}
  Suppose \cat{C} is a category and $\mathcal L$ and $\mathcal R$ are collections of morphisms in \cat{C}. We say that $(\mathcal L,\mathcal R)$ is a \emph{factorisation system} if every morphism in \cat{C} may be written as $r\circ l$ for some $l\in \mathcal L$ and $r\in \mathcal R$. (We do not necessarily demand that $\mathcal L$ and $\mathcal R$ are subcategories, or that they have any other properties at all.)
\end{definition}

Stinespring's theorem says that the category of completely positive maps has a factorisation system, with $\mathcal L$ consisting of all maps that introduce a mixed state, and $\mathcal R$ consisting of all pure maps. See also~\cite{westerbaan:paschke}.
Two types of factorisation system are of particular interest:
\begin{itemize}
  \item \emph{Orthogonal factorisation systems} are typified by the system $(\mathrm{Surj}, \mathrm{Inj})$ in \cat{Set}; the factorisation is found by writing a function as a surjection onto its image followed by inclusion of the image into the codomain. Orthogonal factorisation systems have the property that their factorisations are unique up to a unique isomorphism on the `middle' object through which the morphism is being factored. In fact orthogonal factorisation systems can be defined as those which have this uniqueness property in addition to the property that $\mathcal L$ and $\mathcal R$ are replete\footnote{A subcategory is \emph{replete} if it contains (all objects and) all isomorphisms.} subcategories~\cite[C.0.19]{joyal:quasicategories}.

  \item \emph{Weak factorisation systems}\footnote{To avoid confusion: being a weak factorisation system isn't weaker than being a factorisation system. It is called that merely because being weak is weaker than being orthogonal.} are typified by the system $(\mathrm{Inj}, \mathrm{Surj})$ in \cat{Set}; the factorisation is found by letting the `middle' object be the domain along with one extra element for each point not in the image of the original function.  In a weak factorisation system the factorisations are not necessarily unique, but they do satisfy a weaker property which we will give in the next section.
\end{itemize}

The Stinespring dilation is known to never be unique. So it cannot describe an orthogonal factorisation system, but might describe a weak factorisation system.  In fact, we will see that it does indeed give a weak factorisation system after a slight modification, namely expanding the collection $\mathcal L$ to contain all isomorphisms. This might be expected from the interpretation: the introduction of an ancillary completely mixed state resembles an injection as it maps a smaller space into a larger one, so we might have predicted in advance that there would be an analogy with $(\mathrm{Inj}, \mathrm{Surj})$ but not with $(\mathrm{Surj}, \mathrm{Inj})$.

\section{Lifting}\label{sec:lifting}

Weak factorisation systems are defined in terms of the following relation. Let \cat{C} be any category and let $f$ and $g$ be morphisms in \cat{C} (there need not be any relation between their domains and codomains). Say that $f$ and $g$ have the \emph{lifting property}, and write $f \lift g$, when any commuting square
\[
\begin{tikzcd}
  \bullet\arrow{r}\arrow{d}[swap]{f}&\bullet\arrow{d}{g}\\
  \bullet\arrow{r}\arrow[dashed]{ur}{h}&\bullet
\end{tikzcd}
\]
has a morphism $h$ making the two triangles commute.

For a collection $\mathcal{A}$ of morphisms in \cat{C} (that are not necessarily closed under composition), define $\rlift{\mathcal A}$ and $\llift{\mathcal A}$ to be the collections of morphisms that have the lifting property on the right or left of all the morphisms in $\mathcal{A}$:
\begin{align*}
  \rlift{\mathcal A} & = \{g\in\mathrm{Mor}(\cat C)|\forall f\in\mathcal A:f\lift g\}\text, \\
  \llift{\mathcal A} & = \{f\in\mathrm{Mor}(\cat C)|\forall g\in\mathcal A:f\lift g\}\text.
\end{align*}

It is easily shown that $\rlift{\mathcal A}$ and $\llift{\mathcal A}$ contain all isomorphisms and are closed under composition, and are therefore replete subcategories of \cat{C}. We also have that $\llift{(-)}$ and $\rlift{(-)}$ are order reversing in the sense that if $\mathcal A\subseteq\mathcal B$ then $\rlift{\mathcal A} \supseteq \rlift{\mathcal B}$ and $\llift{\mathcal{A}} \supseteq\,\, \llift{\mathcal B}$. Furthermore $\mathcal{A} \subseteq \rlift{(\llift{\mathcal A})}$ and $\mathcal A \subseteq \llift{(\rlift{\mathcal A})}$; pairs of order reversing functions with this property are known as \emph{Galois connections}. It follows that alternating applications of $\llift{(-)}$ and $\rlift{(-)}$ eventually cease to have an effect: $\llift{(\rlift{(\llift{\mathcal A})})} = \llift{\mathcal A}$ and $\rlift{(\llift{(\rlift{\mathcal A})})} = \rlift{\mathcal A}$.

Given any collection $\mathcal A$ of morphisms, we can generate a pair $(\mathcal L,\mathcal R)$ by letting $\mathcal R = \rlift{\mathcal A}$ and then $\mathcal L=\,\llift{\mathcal R}$. This pair then has the property that $\mathcal L={}^{\lift[0.7]}\mathcal R$ and $\mathcal R=\rlift{\mathcal L}$. \footnote{In the theory of model categories, $(\mathcal L,\mathcal R)$ is called \emph{cofibrantly generated} from $\mathcal A$. We won't use this terminology here.} Thus for any collection of morphisms $\mathcal A$ we can generate a pair $(\mathcal L,\mathcal R)$ and then ask if this pair happens to form a factorisation system.

\begin{definition}
  A \emph{weak factorisation system} is a factorisation system with $\mathcal L=\,\llift{\mathcal R}$ and $\mathcal R=\rlift{\mathcal L}$.
\end{definition}

\begin{example}\label{ex:injsurjweakfactorisationsystem}
  Let us see why $(\mathrm{Inj}, \mathrm{Surj})$ is a weak factorisation system in \cat{Set}.

  First note that this is indeed a factorisation system. Given a function $f \colon A\to B$ let $i$ be the inclusion $A\to A\sqcup\left(B\smallsetminus\mathrm{im}(f)\right)$ and let $s=f \sqcup \id[B\smallsetminus\mathrm{im}(f)] \colon A\sqcup\left(B\smallsetminus\mathrm{im}(f)\right)\to \mathrm{im}(f)\sqcup\left(B\smallsetminus\mathrm{im}(f)\right)=B$. Then $i$ is injective, $s$ is surjective, and $s\circ i = f$.

  Next we want to show that $i\lift s$ for any injection $i\colon A\to B$ and surjection $s\colon C\to D$. So suppose that
  \[
   \begin{tikzcd}
    A\arrow{r}{j}\arrow{d}[swap]{i}&C\arrow{d}{s}\\
    B\arrow{r}[swap]{k}&D
   \end{tikzcd}
  \]
  commutes. Define $h\colon B\to C$ by for any $b\in \mathrm{im}(i)$ defining $h(b)=j\circ i^{-1} (b)$ (which is well defined because $i$ is injective), and for any $b\notin \mathrm{im}(i)$ defining $h(b)$ to be any element of the preimage of $k(b)$ under $s$ (which is nonempty because $s$ is surjective). Then $h\circ i = j$ and $s \circ h = k$, as required.

  Thus $\mathrm{Inj}\subseteq \llift{\mathrm{Surj}}$ and $\mathrm{Surj}\subseteq \rlift{\mathrm{Inj}}$. Finally we must prove that there are no noninjections in $\llift{\mathrm{Surj}}$ nor nonsurjections in $\rlift{\mathrm{Inj}}$.
  Suppose $i' \colon A\to B$ is not an injection, so there is some $c$ with $c=i'(a)$ and $c=i'(b)$. Let $i'=s\circ i$ be an $(\mathrm{Inj}, \mathrm{Surj})$-factorisation. Then there is no $h$ making 
  \[
   \begin{tikzcd}
    A\arrow{r}{i}\arrow{d}[swap]{i'}&\bullet\arrow{d}{s}\\
    B\arrow{r}[swap]{\id[B]}\arrow{ur}{h}&B
   \end{tikzcd}
  \]
  commute, because $i(a)\neq i(b)$ and $h$ must map $c$ to both $i(a)$ and $i(b)$.

  Likewise, if $s' \colon A\to B$ is not a surjection then pick some $d$ not in its image and let $s'=s\circ i$ be an $(\mathrm{Inj}, \mathrm{Surj})$-factorisation. Then there is no $h$ making
  \[
   \begin{tikzcd}
    A\arrow{r}{\id[A]}\arrow{d}[swap]{i}&A\arrow{d}{s'}\\
    \bullet\arrow{r}[swap]{s}\arrow{ur}{h}&B
   \end{tikzcd}
  \]
  commute, because $h$ must map the nonempty set $s^{-1}(d)$ to the empty set $s'^{-1}(d)$.
\end{example}

Next, we adapt these notions to the setting of symmetric monoidal categories. The problem is that if \cat{C} is a symmetric monoidal category then the subcategories $\llift{\mathcal A}$ and $\rlift{\mathcal A}$ need not be monoidal subcategories. We adapt our definitions accordingly:
\begin{align*}
  f\oslash g
  & \iff \forall A,B\in\mathrm{Ob}(\cat C)\colon f\otimes\mathrm{id}_A\lift g\otimes\mathrm{id}_B \\
  \Rlift{\mathcal{A}} &\; = \{g\in\mathrm{Mor}(\cat C) \mid \forall f\in\mathcal A \colon f\oslash g\} \\
  \Llift{\mathcal{A}} &\; = \{f\in\mathrm{Mor}(\cat C) \mid \forall g\in\mathcal A \colon f\oslash g\}
\end{align*}
Now $\Llift{\mathcal A}$ and $\Rlift{\mathcal A}$ are replete symmetric monoidal subcategories\footnote{The definition of $\oslash$ in terms of $\lift$ resembles complete positivity of quantum processes: for $f$ to be completely positive it is not enough to preserve positivity of states, also $f\otimes \mathrm{id}_A$ must preserve positivity of states.}, and $\Rlift{(-)}$ and $\Llift{(-)}$ form a Galois connection.

\begin{definition}
  A \emph{symmetric monoidal weak factorisation system} is a factorisation system with $\mathcal L=\,\Llift{\mathcal R}$ and $\mathcal R=\Rlift{\mathcal L}$.
\end{definition}

\begin{example}\label{ex:injsurjmonoidal}
  If we consider \cat{Set} to be a symmetric monoidal category under Cartesian products $\times$, then our example $(\mathrm{Inj}, \mathrm{Surj})$ is indeed a symmetric monoidal weak factorisation system. Since $\oslash$ is a stricter property than $\lift$ we have that $\Llift{\mathrm{Surj}} \subseteq \llift{\mathrm{Surj}} = \mathrm{Inj}$ and $\Rlift{\mathrm{Inj}} \subseteq \rlift{\mathrm{Inj}} = \mathrm{Surj}$ so it suffices to check that $i\oslash s$ for all injections $i$ and surjections $s$.
  But this is immediate. If $i$ is an injection then $i\times\id[A]$ is an injection for any set $A$. Likewise, if $s$ is a surjection then $s\times\id[B]$ is an injection for any set $B$. So $i\times\id[A] \mathrel{\lift} s\times\id[B]$ for all $A$ and $B$ and hence $i\oslash s$.
\end{example}

\section{Purity}\label{sec:purity}

Suppose a symmetric monoidal category \cat{C} comes equipped with a morphism $I\rightarrow A$ for each object $A$. We draw these morphisms as $\tinymix$, and call them \emph{completely mixed states}.
Call the collection of all the completely mixed states $\mathcal M'$.
 
 The following definition is our operational description of purity.
 
\begin{definition} 
  Given a symmetric monoidal category equipped with a family of completely mixed states, define the \emph{pure morphisms} to be those in the class $\mathcal P=\Rlift{\mathcal M'}$, and the \emph{mixing morphisms} to be those in the class $\mathcal M=\,\Llift{\mathcal P}$.
\end{definition}
  
The interpretation of the collection $\mathcal M$ is not immediately clear, since the notion of mixing is already captured by the completely mixed states $\mathcal M'$. It will become clear from the examples below that the elements of $\mathcal M$ represent processes which introduce nondeterminism while having essentially no other effect. For example, if $p$ is an isomorphism, then
\begin{equation}\label{eq:simplemixing}
    \begin{pic}[xscale=.75,yscale=.5]
      \node[morphism] (a) at (0,.5) {$\;p\;$};
      \draw (a.north) to +(0,.5)node[above]{B};
      \draw (a.south west) to +(0,-.9)node[below]{A};
      \draw (a.south east) to +(0,-.3)node[right]{C} to +(0,-.6)node[ground,scale=-.75]{};
    \end{pic}
\end{equation}
is in $\mathcal M$; we call such maps \emph{simple mixing maps}.
  
Thus we have a pair $(\mathcal M,\mathcal P)$, which might be a factorisation system. If it is a factorisation system, then by construction it will certainly be a symmetric monoidal weak one. Our examples will in fact have the following, even stronger, property.

\begin{definition}
 A category \cat{C} \emph{has purification} if all morphisms are of the form~\eqref{eq:simplemixing}
  for pure $p$.
\end{definition}

For example, Stinespring's theorem says precisely that the category of completely positive maps has purification. If \cat{C} has purification then $(\mathcal M,\mathcal P)$ is a factorisation system because $\mathrm{id}\otimes\tinymix$ is always in $\mathcal M$.

Dualising the above definitions, we can also consider symmetric monoidal categories equipped with families of morphisms $\tinyground$
 which we think of as taking the system $A$ and discarding it. We refer to these morphisms as \emph{discarding effects} and denote the collection of them as $\mathcal D'$. Call the morphisms in the class $\mathcal C= \,\Llift{\mathcal D'}$ \emph{copure}, and define the \emph{discarding morphisms} to be those in the class $\mathcal D=\Rlift{\mathcal C}$. Say \cat{C} has \emph{copurification} if all morphisms are of the form
\begin{equation}\label{eq:simplediscarding}
    \begin{pic}[scale=.75]
      \node[morphism] (f) at (0,.5) {$\;c\;$};
      \draw (f.south) to +(0,-.5);
      \draw (f.north east) to +(0,.5);
      \draw (f.north west) to +(0,.2)node[ground,scale=.75]{};
    \end{pic}
\end{equation}
for some copure $c$.

We will see in examples that the copure morphisms represent processes that do not destroy any information, whereas destruction of information is the only effect of the discarding morphisms.
  
Let's decode our abstract definitions of pure and mixing from the language of weak factorisation systems. In diagrammatic notation, $p$ is pure if and only if:
\[
  \tag{$*$}
    \begin{pic}[scale=.75]
      \node[morphism] (1) at (0,-0) {$\;a\;$};
      \node[morphism] (2) at ([yshift=20]1.north west) {$p$};
      \draw (1.north west) to (2.south);
      \draw (1.south) to +(0,-.5);
      \draw (1.north east) to +(0,1.47);
      \draw (2.north) to +(0,.5);
    \end{pic}
    =
    \begin{pic}[scale=.75]
      \node[morphism] (1) at (0,0) {$b$};
      \draw (1.south east) to +(0,-0.5)node[ground,scale=-0.7]{};
      \draw (1.south west) to +(0,-1);
      \draw (1.north east) to +(0,1);
      \draw (1.north west) to +(0,1);
    \end{pic}
    \qquad \implies \qquad
    \text{ $b$ is of the form }
    \begin{pic}[scale=.75]
      \node[morphism] (1) at (0,-0) {$\;h\;$};
      \node[morphism] (2) at ([yshift=20]1.north west) {$p$};
      \draw (1.north west) to (2.south);
      \draw (1.south east) to +(0,-.5);
      \draw (1.south west) to +(0,-.5);
      \draw (1.north east) to +(0,1.47);
      \draw (2.north) to +(0,.5);
    \end{pic}
    \text{ where }
    \begin{pic}[scale=.75]
      \node[morphism] (1) at (0,0) {$h$};
      \draw (1.south east) to +(0,-0.5)node[ground,scale=-0.7]{};
      \draw (1.south west) to +(0,-0.7);
      \draw (1.north east) to +(0,.5);
      \draw (1.north west) to +(0,.5);
    \end{pic}
    =
    \begin{pic}[scale=.75]
      \node[morphism] (1) at (0,0) {$a$};
      \draw (1.north west) to +(0,.5);
      \draw (1.south) to +(0,-.7);
      \draw (1.north east) to +(0,.5);
    \end{pic}
\]
and $m$ is mixing iff
\[
  \tag{$\dagger$}
    \begin{pic}[scale=.75]
      \node[morphism] (1) at (0,-0) {$\;a\;$};
      \node[morphism] (2) at ([yshift=-20]1.south west) {$m$};
      \draw (1.south west) to (2.north);
      \draw (1.north east) to +(0,.5);
      \draw (1.north west) to +(0,.5);
      \draw (1.south east) to +(0,-1.47);
      \draw (2.south) to +(0,-.5);
    \end{pic}
    =
     \begin{pic}[scale=.75]
      \node[morphism] (1) at (0,-0) {$\;b\;$};
      \node[morphism] (2) at ([yshift=20]1.north west) {$p$};
      \draw (1.north west) to (2.south);
      \draw (1.south east) to +(0,-.5);
      \draw (1.south west) to +(0,-.5);
      \draw (1.north east) to +(0,1.47);
      \draw (2.north) to +(0,.5);
    \end{pic}
    \qquad \implies \qquad
    \text{$b$ is of the form }
    \begin{pic}[scale=.75]
      \node[morphism] (1) at (0,0) {$\;h\;$};
      \node[morphism] (2) at ([yshift=-20]1.south west) {$m$};
      \draw (1.south west) to (2.north);
      \draw (1.north east) to +(0,.5);
      \draw (1.north west) to +(0,.5);
      \draw (1.south east) to +(0,-1.47);
      \draw (2.south) to +(0,-.5);
    \end{pic}
    \text{ where }
    \begin{pic}[scale=.75]
      \node[morphism] (1) at (0,-0) {$\;h\;$};
      \node[morphism] (2) at ([yshift=20]1.north west) {$p$};
      \draw (1.north west) to (2.south);
      \draw (1.south east) to +(0,-.5);
      \draw (1.south west) to +(0,-.5);
      \draw (1.north east) to +(0,1.47);
      \draw (2.north) to +(0,.5);
    \end{pic}
    =
    \begin{pic}[scale=.75]
      \node[morphism] (1) at (0,0) {$a$};
      \draw (1.south west) to +(0,-1);
      \draw (1.north east) to +(0,1);
      \draw (1.north west) to +(0,1);
      \draw (1.south east) to +(0,-1);
    \end{pic}
\]
for any pure morphism $p$.  

These two criteria are quite involved, and it is often useful to think of the following two special cases where $a$ and $b$ have particularly simple forms.
If $p$ is pure, then
\[
  \tag{$**$}
    \begin{pic}[scale=.75]
      \node[morphism] (2) at ([yshift=20]1.north west) {$p$};
      \draw (2.south) to +(0,-0.7);
      \draw (2.north) to +(0,.7);
    \end{pic}
    =
    \begin{pic}[scale=.75]
      \node[morphism] (1) at (0,0) {$b$};
      \draw (1.south east) to +(0,-0.3)node[ground,scale=-0.7]{};
      \draw (1.south west) to +(0,-.7);
      \draw (1.north) to +(0,.7);
    \end{pic}
    \qquad \implies \qquad
    \text{ $b$ is of the form }
    \begin{pic}[scale=.75]
      \node[morphism] (1) at (0,-0) {$h$};
      \node[morphism] (2) at ([yshift=20]1.north) {$p$};
      \draw (1.north) to (2.south);
      \draw (1.south east) to +(0,-.5);
      \draw (1.south west) to +(0,-.5);
      \draw (2.north) to +(0,.5);
    \end{pic}
    \text{ where }
    \begin{pic}[scale=.75]
      \node[morphism] (1) at (0,0) {$h$};
      \draw (1.south east) to +(0,-0.3)node[ground,scale=-0.7]{};
      \draw (1.south west) to +(0,-0.7);
      \draw (1.north) to +(0,.7);
    \end{pic}
    =
    \begin{pic}[scale=.75]
      \draw (0,0) to +(0,2);
    \end{pic}
\]
and if $m$ is mixing, then
\[
  \tag{$\dagger\dagger$}
    \begin{pic}[scale=.75]
      \node[morphism] (2) at ([yshift=20]1.north west) {$m$};
      \draw (2.south) to +(0,-0.7);
      \draw (2.north) to +(0,.7);
    \end{pic}
    =
    \begin{pic}[scale=.75]
      \node[morphism] (1) at (0,0) {$p$};
      \draw (1.south east) to +(0,-0.3)node[ground,scale=-0.7]{};
      \draw (1.south west) to +(0,-.7);
      \draw (1.north) to +(0,.7);
    \end{pic}\text{ with $p$ pure }
    \qquad \implies \qquad
    \text{ there exists $h$ with }
    \begin{pic}[scale=.75]
      \node[morphism] (1) at (0,0) {$h$};
      \node[morphism] (2) at ([yshift=-20]1.south) {$m$};
      \draw (1.south) to (2.north);
      \draw (1.north east) to +(0,.5);
      \draw (1.north west) to +(0,.5);
      \draw (2.south) to +(0,-.5);
    \end{pic}
    =
    \begin{pic}[scale=.75]
      \draw (0,2.4) to +(0,-1)node[ground,scale=-0.7]{};
      \draw (-0.3,2.4) to +(0,-2.5);
    \end{pic}
    \,\,\text{ and }\,\,
    \begin{pic}[scale=.75]
      \node[morphism] (1) at (0,0) {$p$};
      \node[morphism] (2) at ([yshift=-20]1.south) {$h$};
      \draw (1.south east) to ([xshift=1]2.north east);
      \draw (1.south west) to ([xshift=-1]2.north west);
      \draw (1.north) to +(0,.5);
      \draw (2.south) to +(0,-.5);
    \end{pic}
    =
    \begin{pic}[scale=.75]
      \draw (0.3,2.4) to +(0,-2.5);
    \end{pic}\,
    \text.
\]
 These special cases are necessary but not sufficient, so if $p$ is pure it certainly obeys $(**)$ and if $m$ is mixing it certainly obeys $(\dagger\dagger)$, but the converse is not necessarily true. However, in many of the particular categories \cat C that we are interested in, we do in fact find that $(**)$ and $(\dagger\dagger)$ are as strong as $(*)$ and $(\dagger)$. So when it comes to calculate the classes $\mathcal P$ and $\mathcal M$ for a particular category \cat{C}, one can often proceed by finding the classes of morphisms which obey $(**)$ and $(\dagger\dagger)$, and then verifying that these morphisms further satisfy $(*)$ and $(\dagger)$.
  
 \section{Examples}\label{sec:examples}

The main claim of this paper is that our definitions of `pure', `mixing', `copure' and `discarding' are in accordance with expectations when interpreting morphisms as processes: if a symmetric monoidal category \cat C equipped with a family of morphisms $\tinymix$ is interpreted as a collection of processes in which $\tinymix$ represents a completely mixed state, then the family of morphisms which are deterministic in this interpretation should be those in the class $\mathcal P$.
We will give four examples.

\subsection{Probabilistic}

Let \cat{FStoch} be the symmetric monoidal category whose objects are finite sets (with the Cartesian product as $\otimes$) and whose morphisms are matrices of nonnegative reals with the domain labelling the columns and the codomain labelling the rows (with matrix multiplication as $\circ$, and Kronecker product of matrices as $\otimes$). Interpret a matrix with entries $(f_{ij})$ as a process which, if it begins in state $i$, has probability $f_{ij}$ of going to state $j$.\footnote{
We will forego, in all examples, the usual normalisation demand that the columns sum to $1$. This is to avoid having to deal with the completely mixed state on the empty set, which cannot be normalised. Thus in this example the completely mixed state on a set with $n$ elements will be a column vector with entries $1$ rather than $1/n$.}
The morphisms $\,\tinymix$ and $\,\tinyground$ are given by the column vector of $1$s and the row vector of $1$s.

\begin{lemma}\label{lem:probabilistic:system}
  Suppose that $\mathcal R$ is the collection matrices which have at most one nonzero entry in each column, and $\mathcal L$ is the collection of matrices which have exactly one nonzero entry in each row and at least one nonzero entry in each column. Then $(\mathcal L,\mathcal R)$ is a factorisation system, and $m\oslash p$ for $m\in\mathcal L$, $p\in\mathcal R$.
\end{lemma}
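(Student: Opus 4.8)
The plan is to read a morphism $f\colon A\to B$ as a nonnegative matrix with rows indexed by $B$ and columns by $A$ (write $f_{b,a}$ for its entries), and to treat the two assertions separately. For the factorisation, given $f$ I would take the middle object $M$ to be a refinement of the domain: for $a\in A$ put $S_a=\{b\in B:f_{b,a}\neq 0\}$ and set $M=\coprod_{a\in A}M_a$ with $M_a=S_a$ if $S_a\neq\emptyset$ and $M_a=\{\ast_a\}$ a fresh singleton otherwise (the dummy points keep ``at least one nonzero per column'' true at the zero columns of $f$). Define $l\colon A\to M$ to have a $1$ in row $m$, column $a$ whenever $m\in M_a$, and zeros elsewhere; define $r\colon M\to B$ to have $f_{b,a}$ in row $b$, column $m$ when $m=b\in S_a$, and zeros in every column $\ast_a$. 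Then each row of $l$ has exactly one nonzero and each column at least one, so $l\in\mathcal L$; each column of $r$ has at most one nonzero, so $r\in\mathcal R$; and a one-line computation gives $r\circ l=f$. This part is routine.

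For $m\oslash p$, I would first observe that $\mathcal L$ and $\mathcal R$ are closed under $(-)\otimes\mathrm{id}$: since $(g\otimes\mathrm{id}_X)_{(b,x),(a,x')}=g_{b,a}\,\delta_{x,x'}$, row $(b,x)$ and column $(a,x)$ of $g\otimes\mathrm{id}_X$ have exactly as many nonzero entries as row $b$ and column $a$ of $g$, so the defining conditions of $\mathcal L$ and of $\mathcal R$ are preserved. Hence it is enough to show $l\lift r$ for \emph{all} $l\in\mathcal L$ and $r\in\mathcal R$; instantiating at $l=m\otimes\mathrm{id}_X$ and $r=p\otimes\mathrm{id}_Y$ then gives $m\oslash p$.

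The substantive step is the lifting property. Given a commuting square $r\circ j=k\circ l$ with $l\colon A\to B$ in $\mathcal L$ and $r\colon C\to D$ in $\mathcal R$, I must produce $h\colon B\to C$ with $h\circ l=j$ and $r\circ h=k$. The matrix $l$ determines a surjection $\sigma\colon B\to A$ (the column of the unique nonzero of each row) with weights $\lambda_b=l_{b,\sigma(b)}>0$, so each fibre $F_a=\sigma^{-1}(a)$ is nonempty; the matrix $r$ determines a partial function $\tau\colon C\rightharpoonup D$ with weights $\rho_c>0$ plus a set $C_z$ of all-zero columns. Written out, $h\circ l=j$ is $\sum_{b\in F_a}\lambda_b h_{c,b}=j_{c,a}$ for all $c,a$, and $r\circ h=k$ is $\sum_{c\in\tau^{-1}(d)}\rho_c h_{c,b}=k_{d,b}$ for all $d,b$; these systems decouple --- the first over $a$, the second over $b$, each $b$ lying in a unique $F_a$ --- so the entries $\{h_{c,b}:b\in F_a\}$ may be chosen independently for each $a$. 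When $c\in C_z$ the second equation is vacuous in that entry, so one puts all of $j_{c,a}$ on a single chosen point of $F_a$. When $d\in\mathrm{im}(\tau)$, the block $\{h_{c,b}:c\in\tau^{-1}(d),\,b\in F_a\}$ must be a nonnegative matrix whose $\lambda$-weighted row sums are $j_{c,a}$ and whose $\rho$-weighted column sums are $k_{d,b}$; both marginals sum to $(rj)_{d,a}=(kl)_{d,a}$, so this is a feasible transportation problem, solved by the rank-one assignment $h_{c,b}=j_{c,a}k_{d,b}/(rj)_{d,a}$ (and by $h_{c,b}=0$ when that total is $0$, in which case all the marginals are $0$ already). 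The remaining case is a row $d\notin\mathrm{im}(\tau)$: then $(rj)_{d,a}=0$ for every $a$, hence $(kl)_{d,a}=\sum_{b\in F_a}\lambda_b k_{d,b}=0$, which forces $k_{d,b}=0$ since the terms are nonnegative and $\lambda_b>0$, so the empty constraint $0=k_{d,b}$ holds automatically. Assembling the blocks yields $h$, and $h\circ l=j$ and $r\circ h=k$ follow directly.

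The main obstacle is this last step: recognising that the lifting problem decomposes into per-fibre transportation problems and verifying their feasibility, with particular care for the degenerate cases --- all-zero columns of $r$, rows of $r$ outside the image of $\tau$, and vanishing marginals. The factorisation and the closure of $\mathcal L,\mathcal R$ under tensoring with identities are comparatively mechanical.
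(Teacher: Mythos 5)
Your proof is correct, and it departs from the paper's in two respects. For the factorisation, the paper always uses the middle object $I\times J$ with left factor $\id[I]\otimes\tinymix_J$ and right factor $((i,j),j')\mapsto\delta_{jj'}f_{ij}$; your refined coproduct middle object is equally valid for the lemma as stated (the paper's choice is the one reused later to show that \cat{FStoch} has purification, but that is outside this lemma). More substantively, for the lifting property the paper first factorises the top and bottom arrows of the square and reduces to a square in which all four arrows lie in $\mathcal L\cup\mathcal R$, then builds the lift using the structure of all four matrices; you instead solve the general square directly, keeping $j$ and $k$ arbitrary and exploiting only that $l$ induces a weighted surjection $B\to A$ and $r$ a weighted partial function from $C$ to $D$. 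The computational core is the same in both: the problem splits into independent blocks indexed by a pair (point of the domain of the left arrow, point of the codomain of the right arrow), each block is a transportation problem whose two marginal totals agree by commutativity of the square, and a rank-one formula (your $j_{c,a}k_{d,b}/(rj)_{d,a}$ versus the paper's $p_{zx}m'_{yw}/c_{wz}$) solves it, with the degenerate cases (all-zero columns of the right arrow, rows outside its image, vanishing block totals) handled separately --- and you treat all of these correctly. Your version is arguably more direct, since it dispenses with the preliminary reduction; the paper's reduction buys a square whose four sides are all structured, at the cost of an extra step.
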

\begin{proof}
  See Appendix~\ref{sec:proofs}.
\end{proof}

\begin{lemma}\label{lem:probabilistic:pure}
  The symmetric monoidal subcategory $\mathcal P$ consists of precisely those matrices which have at most one nonzero entry in each column. Dually, $\mathcal C$ consists of precisely those matrices which have at most one nonzero entry in each row.
\end{lemma}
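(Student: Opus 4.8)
The plan is to show that $\mathcal{P}$ coincides with the class $\mathcal{R}$ of Lemma~\ref{lem:probabilistic:system} (matrices with at most one nonzero entry in each column), after which the statement about $\mathcal{C}$ will follow by a dual argument. Throughout, write $e_a$ for the deterministic state at $a$, i.e.\ the standard basis column vector, and recall that $\mathcal{P}=\Rlift{\mathcal{M}'}$, so a matrix $p$ is pure exactly when $m\oslash p$ for every completely mixed state $m$.

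For the inclusion $\mathcal{R}\subseteq\mathcal{P}$ I would just invoke Lemma~\ref{lem:probabilistic:system}. A completely mixed state is a column vector all of whose entries are $1$; on a nonempty set this has exactly one nonzero entry in each row and at least one nonzero entry in its single column, so $\mathcal{M}'\subseteq\mathcal{L}$. Since the lemma gives $m\oslash p$ whenever $m\in\mathcal{L}$ and $p\in\mathcal{R}$, this holds a fortiori for all $m\in\mathcal{M}'$ and $p\in\mathcal{R}$, so $\mathcal{R}\subseteq\Rlift{\mathcal{M}'}=\mathcal{P}$. (The factorisation half of the lemma is not needed for this.)

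For the reverse inclusion $\mathcal{P}\subseteq\mathcal{R}$ I would prove the contrapositive by exhibiting an explicit failure of the lifting property, rather than arguing through the factorisation --- the Galois connection points the wrong way to make a retract argument work directly. Suppose $p\colon A\to B$ has two nonzero entries in some column $a_0$, say at rows $b_1\neq b_2$ with $b_1$-entry $\alpha>0$ and $b_2$-entry $\beta>0$. Let $m\colon I\to\{1,2\}$ be the two-element completely mixed state. Since $m\oslash p$ implies $m\lift p$ (specialising the tensor factors to the monoidal unit), it is enough to produce a commuting square with left edge $m$, right edge $p$, top edge $e_{a_0}\colon I\to A$, and bottom edge the matrix $v\colon\{1,2\}\to B$ whose first column is $\alpha e_{b_1}$ and whose second column is $p\,e_{a_0}-\alpha e_{b_1}$ (nonnegative because $\alpha$ is precisely the $b_1$-entry of column $a_0$), that admits no lift. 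The square commutes since the two columns of $v$ sum to $p\,e_{a_0}$. Any lift $h\colon\{1,2\}\to A$ would satisfy $h\,e_1+h\,e_2=e_{a_0}$ (the sum of its two columns equals $e_{a_0}$), forcing each column of $h$ to be a nonnegative multiple of $e_{a_0}$; then $p\,(h\,e_1)$ is a nonnegative multiple of $p\,e_{a_0}$, and to have vanishing $b_2$-component --- as required of $\alpha e_{b_1}$, using $\beta>0$ --- that multiple must be $0$, so $p\,(h\,e_1)=0\neq\alpha e_{b_1}$. Hence no lift exists, $m\not\oslash p$, and $p\notin\mathcal{P}$; combined with the previous paragraph this gives $\mathcal{P}=\mathcal{R}$.

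For the dual statement I would use that matrix transposition is a symmetric monoidal isomorphism $\cat{FStoch}\cong\cat{FStoch}^{\mathrm{op}}$ (nonnegativity, matrix multiplication and the Kronecker product are all transpose-compatible, and no normalisation is imposed), which exchanges completely mixed states with discarding effects and $\Rlift{(-)}$ with $\Llift{(-)}$; hence it carries $\mathcal{C}=\Llift{\mathcal{D}'}$ onto $\Rlift{\mathcal{M}'}=\mathcal{P}$, so $\mathcal{C}$ consists precisely of the transposes of matrices in $\mathcal{P}$, namely those with at most one nonzero entry in each row. (Alternatively one simply reruns the two arguments above with rows and columns interchanged.) The only substantive step is the construction in the third paragraph --- producing, for each matrix outside $\mathcal{R}$, a completely mixed state it fails to lift against --- and the subtlety there is that it is \emph{nonnegativity} of the entries that obstructs the lift, so the multiplicities $\alpha,\beta$ must genuinely be positive; the degenerate case of the empty set, where the completely mixed state is the empty column vector, should be checked on its own.
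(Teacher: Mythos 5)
Your proposal is correct and follows essentially the same route as the paper: the easy inclusion by citing Lemma~\ref{lem:probabilistic:system}, the hard inclusion by splitting a column with two nonzero entries across a two-element ancilla and showing nonnegativity obstructs any lift (the paper phrases this via the special case $(**)$ with the identity as top edge, where you use the point state $e_{a_0}$ --- an immaterial difference), and the dual statement by transposition. The empty-set caveat you raise is real but is silently ignored by the paper as well, which likewise treats $\mathcal M'\subseteq\mathcal L$ as unproblematic.
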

\begin{proof}
Suppose $p:X\to Y$ is pure, and suppose that $p_{y_0x_0}\neq 0$ and $p_{y_1x_0}\neq 0$. In $(**)$, let $b:X\times\{0,1\}\to Y$ be defined by setting $b_{yx0}=p_{yx}$ for all $x,y$ except that $b_{y_1x_00}=0$, and $b_{yx1}=0$ for all $x,y$ except that $b_{y_1x_01}=p_{y_1x_0}$. Then $h_{x_0x_01}\neq 0$, so $b_{y_0x_01}=(p\circ h)_{y_0x_01}\neq 0$. Hence $y_0=y_1$. So $p$ has at most one nonzero entry in each column.

Conversely, suppose that $p$ has at most one nonzero entry in each column. Then by Lemma~\ref{lem:probabilistic:system} $p\oslash(\tinymix\otimes \mathrm{id}_B)$, so $p$ is pure.

 The second statement follows by taking the transpose of all matrices involved.
\end{proof}

\begin{lemma}\label{lem:probabilistic:discarding}
  The symmetric monoidal subcategory $\mathcal M$ consists of precisely those matrices which have exactly one nonzero entry in each row and at least one nonzero entry in each column. Dually, $\mathcal D$ consists of precisely those matrices which have exactly one nonzero entry in each column and at least one nonzero entry in each row.
\end{lemma}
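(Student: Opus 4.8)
The plan is to identify $\mathcal M$ with the class $\mathcal L$ from Lemma~\ref{lem:probabilistic:system} --- matrices with exactly one nonzero entry in each row and at least one in each column --- by proving the two inclusions $\mathcal L\subseteq\mathcal M$ and $\mathcal M\subseteq\mathcal L$, and then to deduce the dual statement about $\mathcal D$ by transposing matrices. The inclusion $\mathcal L\subseteq\mathcal M$ is immediate: Lemma~\ref{lem:probabilistic:system} gives $m\oslash p$ whenever $m\in\mathcal L$ and $p\in\mathcal R$, and Lemma~\ref{lem:probabilistic:pure} identifies $\mathcal R$ with $\mathcal P$, so $\mathcal L\subseteq\Llift{\mathcal P}=\mathcal M$.

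For $\mathcal M\subseteq\mathcal L$ I would run the retract argument familiar from the theory of weak factorisation systems. Given $m\colon X\to Y$ in $\mathcal M$, factor it as $m=r\circ l$ with $l\in\mathcal L$ and $r\in\mathcal R$ (Lemma~\ref{lem:probabilistic:system}). The square
\[
\begin{tikzcd}
X\arrow{r}{l}\arrow{d}[swap]{m}&Z\arrow{d}{r}\\
Y\arrow{r}[swap]{\id[Y]}\arrow[dashed]{ur}{h}&Y
\end{tikzcd}
\]
commutes, since $r\circ l=m=\id[Y]\circ m$. As $r\in\mathcal R=\mathcal P$ and $m\in\mathcal M=\Llift{\mathcal P}$ we have $m\oslash r$, and tensoring with the monoidal unit gives $m\lift r$, so the square has a filler $h\colon Y\to Z$ with $h\circ m=l$ and $r\circ h=\id[Y]$. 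Thus $m$ is a retract of $l$ in the arrow category, and it remains to show that $\mathcal L$ is closed under such retracts.

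This closure step is where the real work lies: for a genuine weak factorisation system closure under retracts is automatic, but Lemma~\ref{lem:probabilistic:system} only furnishes a factorisation system together with the relation $\oslash$, so it must be checked by hand. Concretely, from $l=h\circ m$, $m=r\circ l$ and $r\circ h=\id[Y]$ with $l\in\mathcal L$, I would argue entrywise, using only nonnegativity of the entries, that: no column of $m$ is zero, since otherwise the same column of $l=h\circ m$ would be zero; no row of $m$ is zero, since a diagonal entry of $r\circ h=\id[Y]$ forces a nonzero entry in the corresponding row of $r$, whence $l\in\mathcal L$ supplies a nonzero entry in the matching row of $l$, which survives in $m=r\circ l$; and no row of $m$ contains two nonzero entries, since that same diagonal entry provides an index $z^{*}$ with $h_{z^{*},y}>0$, so two nonzero entries in row $y$ of $m$ would yield two nonzero entries in row $z^{*}$ of $l=h\circ m$, contradicting $l\in\mathcal L$. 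Hence $m\in\mathcal L$.

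Finally, the statement about $\mathcal D$ follows by taking transposes of all the matrices involved, exactly as in the proof of Lemma~\ref{lem:probabilistic:pure}: transposition is a contravariant monoidal isomorphism of \cat{FStoch} that swaps $\mathcal M'$ with $\mathcal D'$ and rows with columns, hence swaps $(\mathcal P,\mathcal M)$ with $(\mathcal C,\mathcal D)$ and turns the characterisation of $\mathcal M$ into the asserted characterisation of $\mathcal D$. The main obstacle throughout is the retract-closure of $\mathcal L$ in the third paragraph, and within that the case forbidding two nonzero entries in a single row.
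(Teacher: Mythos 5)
Your proof is correct, and at its core it is the same argument as the paper's, just packaged more abstractly. The paper applies the special case $(\dagger\dagger)$ to the particular factorisation $m = p \circ (\id_X\otimes\tinymix_Y)$ with $p_{y'(x,y)}=\delta_{yy'}m_{yx}$; the resulting data $h\circ m = \id_X\otimes\tinymix_Y$ and $p\circ h=\id_Y$ is exactly your retract square specialised to that one $(\mathcal L,\mathcal R)$-factorisation, and the paper then mines those two equations entrywise much as you mine $h\circ m=l$ and $r\circ h=\id_Y$. Your framing buys three things: it works for an arbitrary $(\mathcal L,\mathcal R)$-factorisation rather than a hand-built one; it isolates the genuine content as retract-closure of $\mathcal L$, correctly flagging that this is where nonnegativity of the entries is indispensable (in a ring with cancellation the entrywise claims would fail); and it silently absorbs the degenerate case $Y=\emptyset$, which the paper has to dispose of with a separate remark about enlarging the ancilla. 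The three entrywise verifications in your closure step are all sound --- in particular the key point that a diagonal entry of $r\circ h=\id_Y$ supplies a single index $z^*$ with both $r_{yz^*}>0$ and $h_{z^*y}>0$, which simultaneously rules out a zero row and a doubly-occupied row of $m$. The converse inclusion and the transposition argument for $\mathcal D$ match the paper exactly.
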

\begin{proof}
 Suppose $m:X\to Y$ is in $\mathcal M$. In $(\dagger \dagger)$, define $p:X\times Y\to Y$ by $p_{yxy}=m_{yx}$ and $p_{y'xy}=0$ if $y\neq y'$. Let $y\in Y$. Then $(p\circ h)_{yy}=1$, so there are $y'\in Y$ and $x\in X$ with $h_{xy'y}\neq 0$. Now $(h\circ m)_{xy'x}=(\mathrm{id}_X\times\tinymix_Y)_{xy'x}=1$, and so $m_{yx}\neq 0$. If also $m_{yx'}\neq 0$, then $(h\circ m)_{x'y'x}\neq 0$ and so $x'=x$. Thus $m$ has exactly one nonzero entry in each row.
 
 If $Y\neq \emptyset$ then $\mathrm{id}_X\times\tinymix_Y$ has at least one nonzero entry in each column. Since $h\circ m=\mathrm{id}_X\times\tinymix_Y$ the map $m$ must also have at least one nonzero entry in each column. If $Y=\emptyset$ then the same proof applies after adding an element to the ancillary system.
  
 Conversely, suppose that a relation $m$ has exactly one nonzero entry in each row. Then by Lemma~\ref{lem:probabilistic:system} and Lemma~\ref{lem:probabilistic:pure} $p\oslash m$ for any pure $p$, so $m\in \mathcal M$.

 The second statement follows by taking the transpose of all matrices involved.
\end{proof}

\begin{theorem}\label{thm:probabilistic:purification}
  The category \cat{FStoch} has purification and copurification. Hence the pairs $(\mathcal M, \mathcal P)$ and $(\mathcal C, \mathcal D)$ are symmetric monoidal weak factorisation systems.
  \end{theorem}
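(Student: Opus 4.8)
The plan is to reduce the statement to the two claims that \cat{FStoch} has purification and has copurification, and then to establish those by writing down explicit factorisations. For the reduction, note first that Lemmas~\ref{lem:probabilistic:pure} and~\ref{lem:probabilistic:discarding} already pin down all four classes $\mathcal{P},\mathcal{M},\mathcal{C},\mathcal{D}$ concretely. Because $\mathcal{P}=\Rlift{\mathcal{M}'}$ and $\mathcal{M}=\Llift{\mathcal{P}}$, the stabilisation of the Galois connection (the monoidal analogue of the fact that alternating applications of $\llift{(-)}$ and $\rlift{(-)}$ eventually cease to have an effect) forces $\mathcal{P}=\Rlift{\mathcal{M}}$ as well, and dually $\mathcal{C}=\Llift{\mathcal{D}}$ and $\mathcal{D}=\Rlift{\mathcal{C}}$. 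Hence, by the remark immediately after the definition of purification (and its evident dual for copurification), it suffices to exhibit for every morphism the factorisations~\eqref{eq:simplemixing} and~\eqref{eq:simplediscarding}: the pairs $(\mathcal{M},\mathcal{P})$ and $(\mathcal{C},\mathcal{D})$ will then automatically be symmetric monoidal weak factorisation systems. The only extra ingredient that remark uses is that $\id[A]\otimes\tinymix_C$ always lies in $\mathcal{M}$, which holds since $\mathcal{M}'\subseteq\Llift{(\Rlift{\mathcal{M}'})}=\mathcal{M}$ (a Galois inclusion) and $\mathcal{M}$ is a symmetric monoidal subcategory; dually $\tinyground_D\otimes\id[B]\in\mathcal{D}$.

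It remains to build the factorisations. Let $f\colon A\to B$ be a morphism of \cat{FStoch}, with matrix entries $f_{ba}$. If $B\neq\emptyset$, take the ancilla $C=B$ and define $p\colon A\times B\to B$ by $p_{b,(a,c)}=f_{ba}$ when $b=c$ and $p_{b,(a,c)}=0$ otherwise. Then every column $(a,c)$ of $p$ has its unique possibly nonzero entry in row $c$, so $p$ has at most one nonzero entry per column and $p\in\mathcal{P}$ by Lemma~\ref{lem:probabilistic:pure}; and since $\tinymix_B$ is the all-ones column vector, $\bigl(p\circ(\id[A]\otimes\tinymix_B)\bigr)_{ba}=\sum_{c}p_{b,(a,c)}=f_{ba}$, so $f$ has the shape~\eqref{eq:simplemixing}. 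When $B=\emptyset$ the map $f$ has empty columns and is therefore already pure, and one takes $C=I$, $p=f$. Copurification is the transpose of this: for $A\neq\emptyset$ set $D=A$ and $c\colon A\to A\times B$ with $c_{(d,b),a}=f_{ba}$ when $d=a$ and $0$ otherwise; each row of $c$ carries at most one nonzero entry, so $c\in\mathcal{C}$ by Lemma~\ref{lem:probabilistic:pure}, and $(\tinyground_A\otimes\id[B])\circ c=f$ because $\tinyground_A$ is the all-ones row vector, while for $A=\emptyset$ the map $f$ is already copure and one takes $D=I$, $c=f$.

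I do not anticipate a real obstacle here: the substantive work has been carried by Lemmas~\ref{lem:probabilistic:system}--\ref{lem:probabilistic:discarding}, and what is left is essentially to display the ``diagonal-splitting'' factorisations above. The only spots demanding any care are the index bookkeeping --- which tensor factor the completely mixed state, respectively the discarding effect, is fed into, and the order of matrix multiplication in \cat{FStoch} --- and the degenerate cases where $A$ or $B$ is empty, which are dispatched by the observation that every morphism into $\emptyset$ is pure and every morphism out of $\emptyset$ is copure.
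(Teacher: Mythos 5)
Your proof is correct and follows essentially the same route as the paper: the explicit ``diagonal'' dilation $p_{b,(a,c)}=\delta_{bc}f_{ba}$ with ancilla $B$, purity of $p$ via Lemma~\ref{lem:probabilistic:pure}, the dual statement by transposition, and the observation that purification plus the Galois-connection identities makes $(\mathcal M,\mathcal P)$ a symmetric monoidal weak factorisation system. Your extra care with the empty-codomain (resp.\ empty-domain) case and the explicit stabilisation argument $\Rlift{\mathcal M}=\Rlift{(\Llift{(\Rlift{\mathcal M'})})}=\Rlift{\mathcal M'}=\mathcal P$ only make explicit what the paper leaves to its remark following the definition of purification.
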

\begin{proof}
  Any matrix $f:I\to J$ of nonnegative reals factors as follows. Let the ancilla system be the set $J$. Let $p$ be the matrix with $((i,j),j')$-entry $\delta_{jj'}f_{ij}$; this is pure by Lemma~\ref{lem:probabilistic:pure}. Then $f$ is given by~\eqref{eq:matrix}.
  The dual statement is obtained by transposing $f$.
\end{proof}

\subsection{Possibilistic}

Let $\cat{FRel}$ be the symmetric monoidal category whose objects are finite sets (with the Cartesian product as $\otimes$) and whose morphisms are relations between the source and target. Interpret a relation $r \colon A\rightarrow B$ as a nondeterministic process that might send an $a\in A$ to any of the $b\in B$ to which it is related.\footnote{In this case the absence of normalisation means that there might be $a\in A$ not related to anything.} We take $\tinymix$ to be the relation that relates the point to everything in the target, and $\tinyground$ to be the relation that relates everything in the source to the point.
\begin{lemma}
\label{lem:possibilistic:system}
  Suppose that $\mathcal R$ is the collection of partial functions (\textit{i.e.}\ relations in which everything in the source is related to at most one element of the target) and $\mathcal L$ is the collection of surjective, injective and total relations (\textit{i.e.}\ those in which everything in the target is related to exactly one thing in the source and everything in the source is related to at least one thing in the target). Then $(\mathcal L,\mathcal R)$ is a factorisation system, and $m\oslash p$ for $m\in\mathcal L$, $p\in\mathcal R$.
\end{lemma}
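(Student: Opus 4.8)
The approach is to adapt Examples~\ref{ex:injsurjweakfactorisationsystem} and~\ref{ex:injsurjmonoidal} from \cat{Set} to \cat{FRel}. The convenient reformulation is that a relation $m\colon A\to D$ lies in $\mathcal L$ precisely when its converse $m^{\mathrm{op}}\colon D\to A$ is (the graph of) a surjective function, while $\mathcal R$ is simply the class of partial functions. Both observations are immediate from the definitions in the statement.

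For the factorisation, given a relation $r\colon A\to B$ I would take the middle object $M$ to be the graph $\{(a,b)\mid a\mathrel r b\}$ together with one extra point $\widehat a$ for each $a\in A$ lying outside the domain of $r$. Let $m\colon A\to M$ relate each $a$ to every pair $(a,b)$ with $a\mathrel r b$, and also to $\widehat a$ when $a$ is not in the domain of $r$; let $p\colon M\to B$ send each $(a,b)$ to $b$ and relate each $\widehat a$ to nothing. Then $m^{\mathrm{op}}$ is a surjective function, so $m\in\mathcal L$; $p$ is a partial function, so $p\in\mathcal R$; and a routine check gives $p\circ m=r$. The points $\widehat a$ are needed only because $r$ need not be total; were $r$ total one could take $M$ to be the graph alone, exactly as in the untyped span picture of a relation.

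For the property $m\oslash p$, since a Cartesian product of surjective functions is again surjective and a Cartesian product of partial functions is again a partial function, the classes $\mathcal L$ and $\mathcal R$ are closed under $(-)\otimes\mathrm{id}$; so, exactly as in Example~\ref{ex:injsurjmonoidal}, it suffices to prove the plain lifting property $m\lift p$ for all $m\in\mathcal L$ and $p\in\mathcal R$. Given a commuting square
\[
\begin{tikzcd}
  A\arrow{r}{j}\arrow{d}[swap]{m} & C\arrow{d}{p} \\
  D\arrow{r}[swap]{k} & E
\end{tikzcd}
\]
write $q=m^{\mathrm{op}}\colon D\to A$ for the associated surjection and, for $c\in C$, write $p(c)$ for the at-most-singleton set $\{e\mid c\mathrel p e\}$. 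The lift I would propose is the relation $h\colon D\to C$ defined by $d\mathrel h c$ iff $q(d)\mathrel j c$ and $p(c)\subseteq k(d)$.

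The main obstacle is verifying that this $h$ works, and in particular that $h\circ m=j$. The naive candidate $d\mathrel h c\iff q(d)\mathrel j c$ already satisfies $h\circ m=j$ but typically fails $p\circ h=k$, since $q$ may identify points $d,d'$ of $D$ with $k(d)\neq k(d')$; intersecting with the extra condition $p(c)\subseteq k(d)$ repairs $p\circ h=k$, and one must then confirm that this did not discard too much. That confirmation is exactly where commutativity of the square is used: if $p(c)=\emptyset$ then $c$ survives for every preimage of $q(d)$ since $q$ is surjective, and if $p(c)=\{e\}$ with $q(d)\mathrel j c$ then $e\in(p\circ j)(q(d))=(k\circ m)(q(d))$, so $e\in k(d')$ for some $d'$ with $q(d')=q(d)$, whence $d'\mathrel h c$; this gives $h\circ m=j$, and a symmetric computation gives $p\circ h=k$. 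Finally, closure of $\mathcal L$ and $\mathcal R$ under $(-)\otimes\mathrm{id}$ upgrades $m\lift p$ to $m\oslash p$, completing the argument.
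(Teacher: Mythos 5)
Your proof is correct and follows essentially the same route as the paper's: an explicit $(\mathcal L,\mathcal R)$-factorisation, reduction of $\oslash$ to $\lift$ via closure of both classes under $-\otimes\mathrm{id}$, and an explicit lift whose defining clause (share a preimage under the left leg, and either share an image under the right leg or the target element has no image) matches the paper's construction. The only cosmetic differences are that the paper factors $r\colon A\to B$ through $A\times B$ via $\mathrm{id}\otimes\tinymix$ rather than through your padded graph, and that it first reduces the lifting square to one whose top and bottom arrows themselves lie in $\mathcal L$ and $\mathcal R$, whereas you verify the lift for an arbitrary commuting square directly.
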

\begin{proof}
  See Appendix~\ref{sec:proofs}.
\end{proof}

\begin{lemma}
\label{lem:possibilistic:pure}
  The symmetric monoidal subcategory $\mathcal P$ consists precisely of the partial functions. Dually, $\mathcal C$ consists precisely of the injective relations (\textit{i.e.}\ those in which everything in the target is related to at most one element of the source).
\end{lemma}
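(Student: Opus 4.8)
The plan is to follow the template of Lemma~\ref{lem:probabilistic:pure}, working with Boolean relations in place of nonnegative matrices: prove that $\mathcal{P}$ coincides with the class of partial functions by two inclusions, and then deduce the statement about $\mathcal{C}$ by taking the relational converse throughout, which exchanges $\Rlift{(-)}$ with $\Llift{(-)}$ and $\mathcal{M}'$ with $\mathcal{D}'$ and carries a partial function to an injective relation.

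For the inclusion of $\mathcal{P}$ into the partial functions, I would assume $p\colon X\to Y$ is pure and, for contradiction, that some $x_0$ is $p$-related to two distinct elements $y_0\neq y_1$, and then invoke the necessary criterion $(**)$ with ancilla $\{0,1\}$ --- so that the completely mixed state attached to it relates the point to both $0$ and $1$ --- and with $b\colon X\times\{0,1\}\to Y$ defined by letting $(x,0)$ be $b$-related to $y$ iff $x$ is $p$-related to $y$ and $(x,y)\neq(x_0,y_1)$, and letting $(x,1)$ be $b$-related to $y$ iff $(x,y)=(x_0,y_1)$. A direct check gives $b\circ(\id[X]\times\tinymix_{\{0,1\}})=p$ (the $0$-sheet reproduces $p$ but deletes the pair $(x_0,y_1)$, which the $1$-sheet reinstates, and $x_0$ is $p$-related to $y_1$ by hypothesis), so $(**)$ supplies $h$ with $p\circ h=b$ and $h\circ(\id[X]\times\tinymix_{\{0,1\}})=\id[X]$. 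The latter equation forces that whenever $(x,c)$ is $h$-related to $x'$, then $x'=x$. Since $(x_0,1)$ is $b$-related to $y_1$ and $b=p\circ h$, there is $x'$ with $(x_0,1)$ $h$-related to $x'$ and $x'$ $p$-related to $y_1$; hence $x'=x_0$, so $(x_0,1)$ is $h$-related to $x_0$. Then $(x_0,1)$ is $(p\circ h)$-related, hence $b$-related, to $y_0$ as well, which by construction of $b$ is possible only if $y_0=y_1$ --- a contradiction. I expect the extraction of the fact that $(x_0,1)$ is $h$-related to $x_0$ to be the one genuinely delicate step, as it requires feeding the single exceptional pair of $b$ through $b=p\circ h$ and then using the normalisation $h\circ(\id[X]\times\tinymix_{\{0,1\}})=\id[X]$; everything else is routine relational bookkeeping.

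For the reverse inclusion, a partial function $p$ lies in the class $\mathcal{R}$ of Lemma~\ref{lem:possibilistic:system}, so that lemma gives $m\oslash p$ for every $m$ in its class $\mathcal{L}$; each relation $\id[X]\times\tinymix_C$ is injective, surjective and total, hence lies in $\mathcal{L}$, so $(\id[X]\times\tinymix_C)\oslash p$, and, up to the evident symmetry isomorphism, this says $\tinymix_C\oslash p$, for every object $C$; thus $p\in\Rlift{\mathcal{M}'}=\mathcal{P}$. Since partial functions have thereby been shown to be pure directly, no separate verification of the stronger criterion $(*)$ is needed. Finally, applying the relational converse to every relation appearing in the two arguments above yields the dual statement about $\mathcal{C}$ and the injective relations, just as the transpose is used at the close of Lemma~\ref{lem:probabilistic:pure}.
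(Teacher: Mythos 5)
Your proof is correct and follows essentially the same route as the paper: the same $(**)$-instance with ancilla $\{0,1\}$ and the same relation $b$ that isolates the pair $(x_0,y_1)$ on the second sheet, the same use of the normalisation of $h$ to force $(x_0,1)$ to be $h$-related only to $x_0$, the converse via Lemma~\ref{lem:possibilistic:system}, and dualisation by relational converse. The only difference is that you spell out the bookkeeping the paper leaves implicit.
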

\begin{proof}
  Suppose $p \in \mathcal{P}$, and $(x_0,y_0),(x_0,y_1) \in p$.
  In $(**)$, set $b=\{(0,x,y) \mid (x,y) \in p, (x,y)\neq(x_0,y_1)\} \cup \{(1,x_0,y_1)\}$. Then $(1,x_0,x_0) \in h$, so $(1,x_0,y_0) \in p \circ h = b$. Hence $y_0=y_1$. Thus $p$ is a partial function.

  Conversely, suppose that $p$ is a partial function. Then by Lemma~\ref{lem:possibilistic:system} $p\oslash(\tinymix\otimes \mathrm{id}_B)$, so $p$ is pure.

  The second statement follows by taking the dual of all relations involved.
\end{proof}

\begin{lemma}
  The symmetric monoidal subcategory $\mathcal M$ consists of precisely the surjective, injective and total relations. Dually, $\mathcal D$ is precisely the surjective total functions (\textit{i.e.}\ the relations in which everything in the source is related to exactly one thing in the target and everything in the target is related to at least one thing in the source).
\end{lemma}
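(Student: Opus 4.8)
The plan is to follow the proof of Lemma~\ref{lem:probabilistic:discarding} almost verbatim, doing element-chasing in $\cat{FRel}$ in place of manipulating matrix entries, and then to obtain the second statement from the first by the converse-of-relations duality on $\cat{FRel}$.

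For the easy inclusion, recall that $\mathcal P$ is the class of partial functions (Lemma~\ref{lem:possibilistic:pure}), so $\mathcal M = \Llift{\mathcal P}$. If $m$ is surjective, injective and total, then Lemma~\ref{lem:possibilistic:system} gives $m\oslash p$ for every partial function $p$, hence $m\in\mathcal M$. It therefore only remains to prove that every $m\in\mathcal M$ is surjective, injective and total.

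Given $m\colon X\to Y$ in $\mathcal M$, I would invoke the necessary condition $(\dagger\dagger)$. Take the ancilla $C = Y\sqcup\{*\}$ (the adjoined point $*$ matters only when $Y=\emptyset$), and let $p\colon X\otimes C\to Y$ be the relation $\{((x,c),c)\mid c\in Y,\ (x,c)\in m\}$; it relates each element of $X\otimes C$ to at most one element of $Y$, so it is a partial function, hence pure by Lemma~\ref{lem:possibilistic:pure}. Since $\mathrm{id}_X\otimes\tinymix_C$ relates $x$ to $(x,c)$ for every $c\in C$, one checks directly that $m = p\circ(\mathrm{id}_X\otimes\tinymix_C)$. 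Then $(\dagger\dagger)$ supplies a relation $h\colon Y\to X\otimes C$ with $h\circ m = \mathrm{id}_X\otimes\tinymix_C$ and $p\circ h = \mathrm{id}_Y$. From $p\circ h = \mathrm{id}_Y$ we get, for each $y\in Y$, a pair $(x,c)$ with $(y,(x,c))\in h$ and $((x,c),y)\in p$; the latter forces $c = y$ and $(x,y)\in m$, which proves surjectivity and fixes a distinguished preimage $x$ of $y$ with $(y,(x,y))\in h$. If also $(x',y)\in m$, then $(x',(x,y))\in h\circ m = \mathrm{id}_X\otimes\tinymix_C$, so $x' = x$; this is injectivity. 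Finally, for $x\in X$ the pair $(x,(x,*))$ lies in $\mathrm{id}_X\otimes\tinymix_C = h\circ m$, so there is $y_0$ with $(x,y_0)\in m$, proving totality. Hence $m$ is surjective, injective and total, and $\mathcal M$ is exactly this class.

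The second statement then follows by taking converses: the involution sending each relation to its converse is a symmetric monoidal isomorphism $\cat{FRel}^{\mathrm{op}}\cong\cat{FRel}$ that swaps $\tinymix$ with $\tinyground$, hence swaps $\mathcal M=\Llift{\mathcal P}$ with $\mathcal D=\Rlift{\mathcal C}$; and a relation is surjective, injective and total exactly when its converse is total, functional and surjective, i.e.\ a surjective total function. The only delicate point I anticipate is correctly reading the diagrammatic condition $(\dagger\dagger)$ as the two relational identities $h\circ m = \mathrm{id}_X\otimes\tinymix_C$ and $p\circ h = \mathrm{id}_Y$, together with remembering to pad the ancilla so that the totality argument survives the case $Y=\emptyset$; after that, everything is the routine relational analogue of Lemma~\ref{lem:probabilistic:discarding}.
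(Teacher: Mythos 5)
Your proof is correct and follows essentially the same route as the paper: apply $(\dagger\dagger)$ with the partial function $p=\{((x,y),y)\mid (x,y)\in m\}$ to extract surjectivity, injectivity and totality, use Lemmas~\ref{lem:possibilistic:system} and~\ref{lem:possibilistic:pure} for the converse inclusion, and take converse relations for the statement about $\mathcal D$. The only (harmless) difference is that you pad the ancilla to $Y\sqcup\{*\}$ from the outset, whereas the paper runs the argument with ancilla $Y$ and handles $Y=\emptyset$ by the same padding remark afterwards.
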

\begin{proof}
  Suppose $m \colon X \to Y$ is in $\mathcal{M}$. In $(\dag\dag)$, set $p=\{(x,y,y) \mid (x,y) \in m\}$. Let $y \in Y$. Then $(y,y) \in p \circ h$, so there are $y'\in Y$ and $x \in X$ with $(y,x,y') \in h$.
  Now $(x,x,y') \in   \id[X] \times \tinymix_Y= h \circ m$, and so $(x,y) \in m$.  If also $(x',y) \in m$, then $(x',x,y') \in h \circ m =   \id[X] \times \tinymix_Y$ and so $x'=x$. Thus $m$ is surjective and injective.

 If $Y\neq \emptyset$ then $\mathrm{id}_X\times\tinymix_Y$ is total. Since $h\circ m=\mathrm{id}_X\times\tinymix_Y$ the map $m$ must also be total. If $Y=\emptyset$ then the same proof applies after adding an element to the ancillary system.

  Conversely, suppose that a relation $m$ is surjective and injective. Then by Lemma~\ref{lem:possibilistic:system} and Lemma~\ref{lem:possibilistic:pure} $p\oslash m$ for any pure $p$, so $m\in\mathcal M$.
  
  The second statement follows by taking dual relations.
\end{proof}

\begin{theorem}
  The category \cat{FRel} has purification and copurification. Hence the pairs $(\mathcal M, \mathcal P)$ and $(\mathcal C, \mathcal D)$ are symmetric monoidal weak factorisation systems.
  \end{theorem}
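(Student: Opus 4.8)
The plan is to prove this exactly as Theorem~\ref{thm:probabilistic:purification} was proved for \cat{FStoch}: exhibit an explicit factorisation of the form~\eqref{eq:simplemixing} for every relation, with pure right-hand part, which gives purification; obtain copurification by transposing the construction; and then read off the last sentence from the general remarks of Section~\ref{sec:purity}.

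First I would fix a relation $r \colon I \to J$ and take the ancilla to be $C = J$. Define $p \colon I \times J \to J$ to be the relation $\{((i,j),j) \mid (i,j) \in r\}$. Each element $(i,j)$ of the domain of $p$ is related to at most one element of $J$ (namely $j$, and only when $(i,j) \in r$), so $p$ is a partial function and hence lies in $\mathcal P$ by Lemma~\ref{lem:possibilistic:pure}. It then remains to check $p \circ (\id[I] \times \tinymix_J) = r$: the relation $\id[I] \times \tinymix_J$ relates $i$ to $(i,j)$ for every $j \in J$, so the composite relates $i$ to $j'$ precisely when there is a $j$ with $(i,j)\in r$ and $j = j'$, i.e.\ precisely when $(i,j') \in r$. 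This is a factorisation of shape~\eqref{eq:simplemixing}, so \cat{FRel} has purification. For copurification I would run the transpose: given $r \colon I \to J$ take the ancilla $C = I$, let $c \colon I \to J \times I$ be $\{(i,(j,i)) \mid (i,j) \in r\}$, note that $c$ is an injective relation and hence copure by the dual half of Lemma~\ref{lem:possibilistic:pure}, and observe that $(\id[J] \times \tinyground_I) \circ c = r$ by the same computation read upside down, giving a factorisation of shape~\eqref{eq:simplediscarding}.

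With purification in hand, $(\mathcal M, \mathcal P)$ is a factorisation system: by the (monoidal) Galois connection $\mathcal M' \subseteq \Llift{(\Rlift{\mathcal M'})} = \Llift{\mathcal P} = \mathcal M$, and since $\mathcal M$ is a symmetric monoidal subcategory it contains $\id[A] \otimes \tinymix_C$ for all $A,C$, so the factorisation built above is an $(\mathcal M,\mathcal P)$-factorisation; dually $(\mathcal C,\mathcal D)$ is a factorisation system. Finally, by construction $\mathcal P = \Rlift{\mathcal M'}$ and $\mathcal M = \Llift{\mathcal P}$, and the Galois connection then forces $\mathcal M = \Llift{\mathcal P}$ together with $\mathcal P = \Rlift{\mathcal M}$ (and symmetrically for $\mathcal C$, $\mathcal D$), so each of these factorisation systems is automatically a symmetric monoidal weak one.

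I do not expect a real obstacle here: essentially all the content has already been discharged by Lemma~\ref{lem:possibilistic:system} and the three lemmas characterising $\mathcal P$, $\mathcal C$, $\mathcal M$, $\mathcal D$, and this theorem is pure assembly. The only point needing a moment's care is the degenerate case in which the ancilla set is empty (when $r$ has empty codomain, or, dually, empty domain); but the factorisations above still make sense verbatim, and membership $\id \otimes \tinymix \in \mathcal M$ is obtained abstractly from the Galois connection rather than from the explicit description of $\mathcal M$, so no separate treatment of this case is actually needed.
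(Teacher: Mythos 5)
Your proposal is correct and matches the paper's own proof: the paper factors $r\colon A\to B$ as the completely mixed state introduction $A\to A\times B$ followed by the partial function sending $(a,b)\mapsto b$ when $(a,b)\in r$, and dually via the injective relation $a\mapsto(a,b)$ for $(a,b)\in r$ followed by discarding, which is exactly your construction. The extra assembly steps (Galois connection, $\id\otimes\tinymix\in\mathcal M$, and the degenerate empty-ancilla case) are sound and simply make explicit what the paper delegates to the general remarks of Section~\ref{sec:purity}.
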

\begin{proof}
  Any relation $r \colon A \to B$ factors as the introduction of a completely mixed state on $B$, given by the relation $A \to A \times B$ that relates $a \in A$ to $(a,b)$ for any $b \in B$, followed by a partial function $A \times B \to B$ that sends $(a,b)$ to $b$ when $(a,b) \in r$. 
  Dually, $r$ factors as an injective relation $A \to A \times B$ that relates $a \in A$ to $(a,b)$ if $(a,b) \in r$, followed by discarding $A$, \textit{i.e.}\ the relation $A \times B \to B$ that relates $(a,b)$ to $b$.
\end{proof}

\subsection{Deterministic}

Let $\cat{FSet}$ be the category whose objects are finite sets (with the Cartesian product as $\otimes$) and whose morphisms are functions. Interpret a function $f\colon A\rightarrow B$ as a deterministic process that sends $a\in A$ to $f(b)\in B$.\footnote{As we do not consider mixing, the category in this example is normalised. The unnormalised version has \emph{partial} functions.} Take $\tinyground$ to be the function that maps everything in its source to the point (in fact we had to chose this function, since $I$ is terminal). There is no notion of a completely mixed state in this category.

\begin{lemma}
  The symmetric monoidal subcategory $\mathcal C$ consists precisely of the injections.
\end{lemma}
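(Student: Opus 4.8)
The plan is to prove both inclusions directly, after unwinding what membership in $\mathcal C=\Llift{\mathcal D'}$ says concretely in $\cat{FSet}$. Since $\otimes$ is the Cartesian product and $I$ is the terminal set, a discarding effect $\tinyground_E\colon E\to I$ tensored with $\id[B]$ is, up to the unitor, the projection $\pi_B\colon E\times B\to B$ onto the second factor. Hence $c\colon X\to Y$ is copure exactly when, for all finite sets $E,A,B$, every commuting square
\[
\begin{tikzcd}
  X\times A\arrow{r}{u}\arrow{d}[swap]{c\times\id[A]}&E\times B\arrow{d}{\pi_B}\\
  Y\times A\arrow{r}[swap]{v}&B
\end{tikzcd}
\]
admits a filler $h\colon Y\times A\to E\times B$ with $h\circ(c\times\id[A])=u$ and $\pi_B\circ h=v$. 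I would record this reformulation first; it is the dual of the $(\mathrm{Inj},\mathrm{Surj})$ analysis in Examples~\ref{ex:injsurjweakfactorisationsystem} and~\ref{ex:injsurjmonoidal}, with ``surjection'' replaced by ``product projection onto a factor''.

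For the inclusion $\mathcal C\subseteq\mathrm{Inj}$, I would argue contrapositively. If $c$ is not injective, choose $x_0\neq x_1$ with $c(x_0)=c(x_1)$; then $X$ and $Y$ are non-empty. Taking $A=B=I$ and $E=X$, the right-hand edge of the reformulated square becomes $\tinyground_X\colon X\to I$, and the square with top edge $\id[X]$ and bottom edge the unique map $Y\to I$ commutes. A filler $h\colon Y\to X$ would then satisfy $h\circ c=\id[X]$, forcing $x_0=h(c(x_0))=h(c(x_1))=x_1$, a contradiction; so $c\notin\mathcal C$.

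For the inclusion $\mathrm{Inj}\subseteq\mathcal C$, suppose $c$ is injective and a square as above is given. Writing $u(x,a)=(u_1(x,a),u_2(x,a))$, commutativity forces $u_2(x,a)=v(c(x),a)$, so it suffices to put $h(y,a)=(h_1(y,a),v(y,a))$ and arrange $h_1(c(x),a)=u_1(x,a)$ for $y$ in the image of $c$ (well defined because $c$ is injective), taking $h_1(y,a)$ to be any fixed element of $E$ for $y$ outside the image. Then $\pi_B\circ h=v$ holds by construction, and $h\circ(c\times\id[A])(x,a)=(u_1(x,a),v(c(x),a))=(u_1(x,a),u_2(x,a))=u(x,a)$, so $h$ is the required filler.

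The step needing most care is the bookkeeping around degeneracies involving the empty set. When $E=\emptyset$ one has $E\times B=\emptyset$, which forces $X\times A=\emptyset$; similarly the whole left column of the square collapses when $X=\emptyset$. In these situations $\pi_B$ is an empty map rather than a surjection, and the filler above cannot ``pick a fixed element of $E$'', so each such case has to be checked directly (or one restricts to non-empty sets). Everything else is the routine dual of Example~\ref{ex:injsurjweakfactorisationsystem}.
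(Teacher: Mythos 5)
Your proof takes essentially the same route as the paper's: unfold the lifting condition against product projections, extract a retraction of $c$ to force injectivity, and for the converse build the filler by using injectivity of $c$ on its image and a fixed element of $E$ off it. The empty-set degeneracy you flag (an injection $\emptyset\to Y$ with $Y\neq\emptyset$ fails to lift against $\tinyground_\emptyset$) is genuine, but the paper's own proof has the same issue and waves it away with ``we may assume $D$ is nonempty,'' so it does not distinguish your argument from theirs.
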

\begin{proof}
  Unfolding the definitions, $f \colon A \to B$ is in $\mathcal{C}$ when for all finite sets $C,D,E$ and functions $g,h$ making the following square commute there is a fill-in $k$:
  \[
   \begin{tikzcd}
    A\times C\arrow{r}{g}\arrow{d}[swap]{f \times \id[C]}&D \times E\arrow{d}{\pi_2}\\
    B \times C\arrow{r}[swap]{h}\arrow[dashed]{ur}{k}&E
   \end{tikzcd}
  \]
  If this is the case, taking $D=A$, $E=C=1$, $g=\id$, and $h=\pi_2$ shows that $f$ has a left-inverse and so is injective.
  For the converse we may assume that $D$ is nonempty and fix $d \in D$. When $f$ is injective, we can define $k(b,c)=g(a,c)$ if $f(a)=b$, and $k(b,c)=(d,h(b,c))$ otherwise.
\end{proof}

\begin{lemma}
  The symmetric monoidal subcategory $\mathcal D$ consists precisely of the surjections.
\end{lemma}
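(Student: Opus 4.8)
The plan is to observe that, by definition, $\mathcal D=\Rlift{\mathcal C}$, and that the previous lemma identifies $\mathcal C$ with the class of injections. So the statement amounts to computing $\Rlift{\mathrm{Inj}}$ inside \cat{FSet}, and this is essentially already available: Examples~\ref{ex:injsurjweakfactorisationsystem} and~\ref{ex:injsurjmonoidal} show that $\Rlift{\mathrm{Inj}}=\mathrm{Surj}$ in \cat{Set}, and every step of that argument restricts to finite sets. I would therefore carry that argument over.

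In detail, for the inclusion $\mathrm{Surj}\subseteq\mathcal D$, let $g$ be a surjection and $f$ an injection; for any finite sets $E,F$ the map $f\times\id[E]$ is injective and $g\times\id[F]$ is surjective, so the explicit diagonal filler from Example~\ref{ex:injsurjweakfactorisationsystem} (invert the injection on its image, and choose a preimage under the surjection elsewhere) gives $f\times\id[E]\lift g\times\id[F]$, hence $f\oslash g$; as $f$ was arbitrary, $g\in\Rlift{\mathrm{Inj}}=\mathcal D$. For the reverse inclusion, suppose $g\colon C\to D$ is not surjective, pick $d\notin\mathrm{im}(g)$, and factor $g=s\circ i$ with $i\colon C\to C\sqcup(D\smallsetminus\mathrm{im}(g))$ the inclusion (an injection) and $s$ the evident surjection onto $D$. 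Specialising the tensor factors in the definition of $\oslash$ to the monoidal unit reduces the requirement to $i\lift g$, and the square with sides $\id[C]$, $i$, $g$, $s$ admits no filler, since any filler would send the nonempty set $s^{-1}(d)$ into the empty set $g^{-1}(d)$; hence $i\not\oslash g$ and $g\notin\mathcal D$. Together these inclusions give $\mathcal D=\mathrm{Surj}$.

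The only point needing attention is that all the constructions stay inside \cat{FSet}: the middle object $C\sqcup(D\smallsetminus\mathrm{im}(g))$ of the factorisation and the products $X\times E$ are finite whenever $C,D,E$ are, so nothing goes wrong. I do not expect a genuine obstacle — the argument is a transcription of the already-verified \cat{Set} calculation, with the identification $\mathcal C=\mathrm{Inj}$ from the previous lemma doing the real work.
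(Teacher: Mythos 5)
Your proof is correct, and the inclusion $\mathrm{Surj}\subseteq\mathcal D$ is argued exactly as the paper does (via the stability of injections and surjections under $-\times\id$ and the filler of Example~\ref{ex:injsurjweakfactorisationsystem}); indeed you are slightly more explicit than the paper about why membership in $\rlift{\mathrm{Inj}}$ upgrades to membership in $\Rlift{\mathrm{Inj}}$. For the converse the two arguments diverge: the paper specialises the lifting square to the injection $\emptyset\rightarrowtail D$ against $g$ with $\id_D$ along the bottom, so that a filler is literally a section of $g$, giving ``$g\in\mathcal D$ implies $g$ split epi, hence surjective'' in one line; you instead take the contrapositive, test $g$ against the injective leg $i\colon C\to C\sqcup(D\smallsetminus\mathrm{im}(g))$ of its own $(\mathrm{Inj},\mathrm{Surj})$-factorisation, and observe that a filler would have to map the nonempty fibre $s^{-1}(d)$ into the empty fibre $g^{-1}(d)$. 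Both are sound; the paper's version is marginally shorter and avoids constructing the middle object, while yours reuses the already-verified \cat{Set} computation verbatim. Your reduction of $\oslash$ to $\lift$ by taking the tensor factors to be the unit is legitimate since lifting is invariant under the unitor isomorphisms, and all objects involved are finite, so the argument does stay inside \cat{FSet} as you claim.
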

\begin{proof}
  Unfolding definitions with the previous lemma, $g \colon C \to D$ is in $\mathcal{D}$ when for all finite sets $A$, $B$, $E$, $F$, injections $f \colon A \rightarrowtail B$, and functions $h$, $k$ making the following square commute there is a fill-in:
  \[
   \begin{tikzcd}
    A\times E\arrow{r}{h}\arrow[>->]{d}[swap]{f \times \id[E]}&C \times F\arrow{d}{g \times \id[F]}\\
    B \times E\arrow{r}[swap]{k}\arrow[dashed]{ur}{l}&D \times F
   \end{tikzcd}
  \]
  If this is the case, taking $A=\emptyset$, $B=D$, and $E=F=1$ shows that $g$ has a right-inverse and so is surjective.
  Conversely, if $g$ is surjective, then $g \in \rlift{\mathcal{C}}$ by Example~\ref{ex:injsurjweakfactorisationsystem}, so certainly $g \in \mathcal{D}=\Rlift{\mathcal{C}}$.
\end{proof}

The previous two lemmas prove that $(\mathcal C, \mathcal D)$ is just $(\mathrm{Inj},\mathrm{Surj})$, our canonical example of a weak factorisation system. Example~\ref{ex:injsurjmonoidal} showed that it is also a symmetric monoidal weak factorisation system. 
  
\begin{theorem}
  The category \cat{FSet} has copurification.
\end{theorem}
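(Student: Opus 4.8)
The plan is to exhibit, for each function $f \colon A \to B$, an explicit factorisation of the shape~\eqref{eq:simplediscarding} whose copure part is the graph of $f$. Concretely, take the ancilla to be $A$ itself and let $c \colon A \to A \times B$ be the function $c(a) = (a, f(a))$.

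First I would verify that $c$ is copure. By the preceding lemma identifying $\mathcal C$ with the injections of $\cat{FSet}$, it is enough to note that $c$ is injective, and this is immediate because $\pi_1 \circ c = \id[A]$, so $c$ has a left inverse. (The case $A = \emptyset$ is vacuous, and when $A \neq \emptyset$ but $B = \emptyset$ there is no function $f$ to factor.)

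Next I would check that discarding the ancilla returns $f$. The discarding effect $\tinyground_A \colon A \to I$ is the unique function into the terminal object $1 = I$, so under the coherence isomorphism $I \times B \cong B$ the composite $(\tinyground_A \otimes \id[B]) \circ c$ sends $a \mapsto (a, f(a)) \mapsto f(a)$; that is, it equals $f$. Hence every morphism of $\cat{FSet}$ has the form~\eqref{eq:simplediscarding} with $c$ copure, which is precisely the statement that $\cat{FSet}$ has copurification.

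There is essentially no obstacle here: the substantive work is already contained in the two preceding lemmas, which identify $\mathcal C$ and $\mathcal D$ with $\mathrm{Inj}$ and $\mathrm{Surj}$, so all that remains is to name the usual graph factorisation and observe that it has the required diagrammatic shape. It is worth remarking that $\cat{FSet}$ has no analogue of purification, since $I$ is terminal and the category therefore carries no family of completely mixed states; this is why only the copure half of the picture survives in this example.
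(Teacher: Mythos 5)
Your proof is correct and is essentially identical to the paper's: both factor $f$ through its graph $a \mapsto (a, f(a))$, which is injective and hence copure by the preceding lemma, followed by discarding the $A$ component. The extra remarks about empty sets and the absence of completely mixed states are accurate but not needed.
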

\begin{proof}
  Any function $f \colon A \to B$ factors as the injection $A \rightarrowtail A \times B$ given by $a \mapsto (a,f(a))$, followed by the discarding map $\pi_2 \colon A \times B \to B$.
\end{proof}

\subsection{Quantum}

Let \cat{Quant} be the category of finite-dimensional Hilbert spaces and completely positive maps\footnote{Since we are not restricting our attention to normalised maps we do not demand that our maps preserve trace.}. Let $\tinymix$ be the state with density matrix given by the identity, and let $\tinyground$ take the trace.
There is a canonical functor $F \colon \cat{FHilb}\rightarrow\cat{Quant}$, and the pure morphisms in \cat{Quant} are usually defined to be the ones in the image of this functor~\cite{selinger:cpm,coeckeheunen:cp}. We will show that our definitions agree with this.


\begin{lemma}\label{lem:minimalstinespring}
  \begin{enumerate}[label={(\alph*)}]
  \item \label{lem:minimal:exists} Every completely positive map has a \emph{minimal} Stinespring dilation $p$ as in~\eqref{eq:stinespring}, 
   such that any other Stinespring dilation $p'$ allows a coisometry\footnote{A linear map between two Hilbert spaces is a \emph{coisometry} iff $j\circ j^\dagger=\mathrm{id}$. Its image in \cat{Quant} satisfies $j \circ \tinymix = \tinymix$.}  $j \in \cat{FHilb}$ satisfying:
    \begin{equation}\label{eq:minimalstinespring}
     \begin{pic}[xscale=.75,yscale=.5]
       \node[morphism] (f) at (0,-.5) {$\;p\;$};
       \node[morphism] (i) at ([yshift=-21]f.south east) {$\;j\;$};
       \draw (f.north) to +(0,.5);
       \draw (f.south west) to +(0,-1.5);
       \draw (f.south east) to (i.north);
       \draw (i.south) to +(0,-.5);
     \end{pic}\quad
     = \quad
     \begin{pic}[xscale=.75,yscale=.5]
       \node[morphism] (f) at (0,-.5) {$\;p'\;$};
       \draw (f.north) to +(0,.5);
       \draw (f.south west) to +(0,-1.5);
       \draw (f.south east) to +(0,-1.5);
     \end{pic}
 \end{equation}
\item \label{lem:minimal:epi} A Stinespring dilation $p$ as in~\eqref{eq:stinespring} is minimal if and only if the following morphism is a surjection:
   \[
     \begin{pic}[xscale=.75,yscale=.5]
       \node[morphism] (f) at (0,-.5) {$\;p\;$};
       \draw (f.north) to[in=90,out=90,looseness=2] +(-1,0) to +(0,-1.325);
       \draw (f.south west) to +(0,-.5);
       \draw (f.south east) to[in=-90,out=-90,looseness=1.5] +(0.7,0) to +(0,1.5);
     \end{pic}
 \]
\item \label{lem:extenddilation} 
  If $p$ and $p'$ are Stinespring dilations as in~\eqref{eq:stinespring} of the same map, and their ancillas $C$ and $C'$ satisfy $\dim(C) \leq \dim(C')$, 
 then there is a coisometry $j$ satisfying~\eqref{eq:minimalstinespring}.
\end{enumerate}
\end{lemma}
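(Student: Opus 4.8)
The plan is to read all three parts as the classical theory of minimal Stinespring dilations, transported through the compact\hyp closed/CPM dictionary. Since $p$ is pure, $p=F(U)$ for a linear map $U\colon A\otimes C\to B$ in $\cat{FHilb}$, and $f=p\circ(\mathrm{id}_A\otimes\tinymix_C)$ reads as $f(\rho)=U(\rho\otimes\mathrm{id}_C)U^\dagger$. The governing invariant is the ``partial evaluation'' map obtained by bending the $B$\nobreakdash-leg of $p$ round as a cap and the $C$\nobreakdash-leg round as a cup; on underlying linear maps this is the map $\Phi_p\colon\overline B\otimes A\to\overline C$ with $\Phi_p(\overline b\otimes a)$ recording the $\overline C$\nobreakdash-component of $U$ evaluated against $\langle b|$ and $|a\rangle$ — and this is exactly the morphism displayed in part~\ref{lem:minimal:epi}. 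The two classical facts I would use are: (i) \emph{shrinking}: restricting the ancilla to $C_0:=\overline{\mathrm{im}(\Phi_p)}\subseteq C$ and corestricting $U$ yields a dilation of the same $f$, and the inclusion $C_0\hookrightarrow C$ is an isometry whose image in $\cat{Quant}$ is a coisometry $C\twoheadrightarrow C_0$ fitting~\eqref{eq:minimalstinespring}; and (ii) \emph{rigidity}: if two dilations of the same $f$ have surjective $\Phi$\nobreakdash-maps, then sending the spanning vectors of one to those of the other is inner\hyp product\hyp preserving (since these inner products are computed from $f$ alone), hence extends to a well\nobreakdash-defined isometry of ancillas, which is onto by surjectivity and so a unitary intertwining the two dilations.

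For part~\ref{lem:minimal:exists}: start from any Stinespring dilation (which exists by Theorem~\ref{thm:stinespring}) and shrink it via~(i) to a dilation $p$ with $\Phi_p$ surjective; call this the minimal dilation. Given another dilation $p'$, shrink it to $q'$ via a coisometry $C'\twoheadrightarrow C'_0$; by~(ii) there is a unitary $C'_0\xrightarrow{\sim}C$ intertwining $q'$ and $p$; composing these, and using that a coisometry sends the maximally mixed state to the maximally mixed state, yields a coisometry $j\colon C'\to C$ with $p\circ(\mathrm{id}_A\otimes j)=p'$, which is~\eqref{eq:minimalstinespring}. One also checks that ``minimal'' is unambiguous: any two dilations with this universal property receive coisometries from one another, which forces their ancillas to have equal dimension and hence the connecting coisometries to be unitaries.

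For part~\ref{lem:minimal:epi}: if $\Phi_p$ is surjective then $C_0=C$, so $p$ equals its own shrinking and is therefore minimal by part~\ref{lem:minimal:exists}. Conversely, if $p$ is minimal, let $q$ be its shrinking, which is also minimal with $\Phi_q$ surjective; applying the universal property of part~\ref{lem:minimal:exists} in both directions gives coisometries $C\to C_q$ and $C_q\to C$, forcing $\dim C=\dim C_q$ and hence both to be unitaries, so $p$ is carried to $q$ by a unitary on the ancilla; bending wires then shows $\Phi_p$ equals $\Phi_q$ precomposed with a unitary, hence is surjective.

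For part~\ref{lem:extenddilation}: apply part~\ref{lem:minimal:exists} to both $p$ and $p'$ to obtain the common minimal dilation $q$ together with coisometries $k\colon C\twoheadrightarrow C_q$ and $k'\colon C'\twoheadrightarrow C_q$ satisfying $q\circ(\mathrm{id}_A\otimes k)=p$ and $q\circ(\mathrm{id}_A\otimes k')=p'$. Decompose $C=C_q\oplus\ker k$ and $C'=C_q\oplus\ker k'$; the hypothesis $\dim C\le\dim C'$ is exactly $\dim\ker k\le\dim\ker k'$, so there is a surjective partial isometry $\ker k'\twoheadrightarrow\ker k$, and letting $j\colon C'\to C$ be the identity on $C_q$ direct\nobreakdash-summed with this partial isometry gives a coisometry with $k\circ j=k'$. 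Then $p\circ(\mathrm{id}_A\otimes j)=q\circ(\mathrm{id}_A\otimes (k\circ j))=q\circ(\mathrm{id}_A\otimes k')=p'$, which is~\eqref{eq:minimalstinespring}. The step I expect to be the real obstacle is fixing this dictionary precisely — identifying the bent\hyp wire morphism with $\Phi_p$ and tracking conjugates and adjoints so that the textbook ``intertwining isometry'' emerges as the \emph{coisometry} $j$ (with $j\circ\tinymix=\tinymix$) demanded here; once that is pinned down, the remaining content is just the two dimension counts above together with the routine degenerate cases (the zero map, or an empty ancilla, as in the $Y=\emptyset$ arguments elsewhere in the paper).
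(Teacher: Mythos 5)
Your proposal is correct and follows essentially the same route as the paper: parts (a) and (b) are the standard minimal-Stinespring facts (which the paper simply cites, whereas you sketch the usual shrinking/rigidity arguments), and for part (c) both proofs factor $p$ and $p'$ through the minimal dilation and then assemble the connecting map as the direct sum of the canonical unitary with an arbitrary (co)isometry between the leftover summands, using the dimension hypothesis. The only cosmetic difference is that the paper works in the dual, building an isometry $i''\colon C\to C'$ out of images and their orthogonal complements, while you build the coisometry $j\colon C'\to C$ directly out of kernels --- adjoint descriptions of the same construction.
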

\begin{proof}
  See Appendix~\ref{sec:proofs}.
\end{proof}

\begin{lemma}\label{lem:quantum:pure}
  The symmetric monoidal subcategory $\mathcal P$ consists of precisely those morphisms of the form $F(f)$ for $f\in\mathrm{Mor}(\cat{FHilb})$.
\end{lemma}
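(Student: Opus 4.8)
Write $\mathcal F$ for the image of $F$; the plan is to prove the inclusions $\mathcal F\subseteq\mathcal P$ and $\mathcal P\subseteq\mathcal F$ separately, in each case translating the diagrammatic conditions into statements about Kraus operators and invoking Lemma~\ref{lem:minimalstinespring}. For $\mathcal F\subseteq\mathcal P$, I would show that every $F(f)$ lies in $\mathcal P=\Rlift{\mathcal M'}$. Since $F(f)\otimes\id=F(f\otimes\id)$, and since $\tinymix_C\otimes\id$ agrees with $\id\otimes\tinymix_C$ up to symmetry isomorphisms, this reduces to proving $\id[W]\otimes\tinymix_C\lift F(g)$ for every object $W$, every completely mixed state $\tinymix_C$, and every morphism $g$ of \cat{FHilb}. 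So suppose given a commuting square with $\id[W]\otimes\tinymix_C$ on the left, $F(g)$ on the right, and CP maps $\phi$ and $\psi$ on the remaining two sides. Stinespring-dilate both $\phi$ and $\psi$ (Theorem~\ref{thm:stinespring}): then $\phi$'s dilation post-composed with $F(g)$, and $\psi$'s dilation pre-composed with $\id[W]\otimes\tinymix_C$ (whose two completely mixed states merge into one), both display the common diagonal $F(g)\circ\phi=\psi\circ(\id[W]\otimes\tinymix_C)$ as a Stinespring dilation of one and the same CP map. After adjoining zero Kraus operators to the dilation coming from $\psi$ until its ancilla dominates the one coming from $\phi$, Lemma~\ref{lem:minimalstinespring}\ref{lem:extenddilation} supplies a coisometry $j$ relating the two dilations. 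The fill-in $h$ is then $F(j)$ placed between an introduction of the extra ancilla of $\psi$'s dilation and the pure part of $\phi$'s dilation; commutativity of the two triangles is immediate from the defining identity of $j$ together with the fact that $F(j)\circ\tinymix=\tinymix$ for a coisometry $j$. As $h$ is a composite of CP maps it is CP, hence a genuine morphism of \cat{Quant}.

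For $\mathcal P\subseteq\mathcal F$, let $p$ be pure and fix a minimal Stinespring dilation of it (Lemma~\ref{lem:minimalstinespring}\ref{lem:minimal:exists}), with pure part $q=F(q_0)$ and ancilla $C$, so that $p=q\circ(\id[A]\otimes\tinymix_C)$; it is enough to prove $\dim C\le 1$, for then $p=q=F(q_0)$. Purity of $p$ yields the necessary condition $(**)$, which applied with $b=q$ produces a CP map $h$ satisfying $q=p\circ h$ and $h\circ(\id[A]\otimes\tinymix_C)=\id[A]$. The second equation is extremely rigid: writing $h$ in Kraus form and using that a sum of rank-one positive operators equal to a rank-one positive operator must have all its summands proportional, every Kraus operator of $h$ is forced to have the form $\id[H_A]\otimes\langle\nu|$ for a linear functional $\langle\nu|$ on $H_C$. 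Substituting this into $q=p\circ h$ shows that the operators $q_0(\id[H_A]\otimes|d\rangle\langle\nu_j|)$ form a Kraus representation of $\sigma\mapsto q_0\sigma q_0^\dagger$; since that map already has the one-operator representation $q_0$, each of them is a scalar multiple of $q_0$. If some scalar is nonzero, then every $q_0(\id[H_A]\otimes|c\rangle)$ is a scalar multiple of a single fixed one of them, contradicting the linear independence of these operators ensured by minimality (Lemma~\ref{lem:minimalstinespring}\ref{lem:minimal:epi}) unless $\dim C\le 1$; and if all scalars vanish then $q_0=0$, so $p=0=F(0)$. In either case $p\in\mathcal F$.

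I expect $\mathcal P\subseteq\mathcal F$ to be the harder part: converting the abstract membership $p\in\mathcal P$ into the operator equation $h\circ(\id[A]\otimes\tinymix_C)=\id[A]$ and then extracting from it enough rigidity to collapse the ancilla is the delicate step, and it must be carried out keeping track of minimality of the fixed dilation (so that the Kraus operators of $q$ are independent) and disposing of the degenerate cases $p=0$ and $\dim C=1$. The inclusion $\mathcal F\subseteq\mathcal P$ is in essence a bookkeeping form of the non-uniqueness of Stinespring dilations, and should be routine once the two ancillas have been padded to compatible sizes.
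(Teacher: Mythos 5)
Your proof is correct. The inclusion of the image of $F$ into $\mathcal P$ is essentially the paper's own argument: Stinespring-dilate the two legs of the lifting square, pad one ancilla until its dimension dominates, invoke Lemma~\ref{lem:minimalstinespring}\ref{lem:extenddilation} for the coisometry $j$, and assemble the fill-in as the pure part of $\phi$'s dilation precomposed with $\id\otimes j$ and an introduction of the surplus ancilla; the two triangles then commute for exactly the reasons you give. For the converse inclusion you and the paper share the opening move --- apply $(**)$ to the equation exhibiting $p$ as a dilation composed with $\id\otimes\tinymix$, and exploit the rigidity of $h\circ(\id\otimes\tinymix)=\id$; your Kraus-operator computation showing every Kraus operator of $h$ is $\id\otimes\langle\nu_j|$ is the coordinate form of the paper's diagrammatic step, which dilates $h$ and uses Lemma~\ref{lem:minimalstinespring}\ref{lem:minimal:epi} to note that the identity is its own minimal dilation, so the dilation of $h$ splits as $\id$ tensored with a coisometry. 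After that the two arguments genuinely diverge. The paper works with an arbitrary dilation $P$, concludes $P = p$ tensored with a state, and recovers $p$ in the image of $F$ by evaluating at a well-chosen pure state and dividing by a nonzero scalar (hence its ``without loss of generality $p$ nonzero''). You instead fix a \emph{minimal} dilation from the outset, substitute the explicit form of $h$ into $q=p\circ h$, and use the linear independence of the Kraus operators of a minimal dilation --- which is indeed the correct translation of the surjectivity criterion in Lemma~\ref{lem:minimalstinespring}\ref{lem:minimal:epi} --- to collapse the ancilla to dimension at most one. Both finishes are sound; yours leans harder on minimality but avoids the scalar-division trick, and you correctly dispose of the degenerate cases $p=0$ and $\dim C\le 1$ (noting only that for $\dim C=1$ one gets $p=F(q_0\circ(\id\otimes|1\rangle))$ rather than literally $p=q$, which is all that is needed).
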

\begin{proof}
  First we assume $p$ pure and show it is in the image of $F$. Let $P$ be a Stinespring dilation of $p$.
  Without loss of generality, assume $p$ nonzero, so all ancillas are nonzero.
  As $p$ is pure, $(**)$ gives $h$ with:
  \[
  \begin{pic}[xscale=.75,yscale=.5]
       \node[morphism] (f) at (0,-.5) {$\;P\;$};
       \draw (f.north) to +(0,.9);
       \draw (f.south west) to +(0,-.9);
       \draw (f.south east) to +(0,-.9);
     \end{pic}\quad
     =\quad
    \begin{pic}[xscale=.75,yscale=.5]
      \node[morphism] (1) at (0,-0) {$h$};
      \node[morphism] (2) at ([yshift=20]1.north) {$p$};
      \draw (1.north) to (2.south);
      \draw (1.south east) to +(0,-.5);
      \draw (1.south west) to +(0,-.5);
      \draw (2.north) to +(0,.5);
    \end{pic}\qquad
    \text{and}\qquad
    \begin{pic}[xscale=.75,yscale=.5]
      \node[morphism] (1) at (0,0) {$h$};
      \draw (1.south east) to +(0,-0.3)node[ground,scale=-0.7]{};
      \draw (1.south west) to +(0,-0.7);
      \draw (1.north) to +(0,.7);
    \end{pic}\quad
    =\quad
    \begin{pic}[xscale=.75,yscale=.5]
      \draw (0,0) to +(0,2);
    \end{pic}
    \]
    Let $H$ be a Stinespring dilation of $h$. By Lemma~\ref{lem:minimalstinespring}\ref{lem:minimal:epi} the identity is clearly a minimal one. So
    \[
    \begin{pic}[xscale=.75,yscale=.5]
      \node[morphism] (1) at (0,0) {$\;H\;$};
      \draw (1.south east) to +(0,-0.3)node[ground,scale=-0.7]{};
      \draw (1.south) to +(0,-0.3)node[ground,scale=-0.7]{};
      \draw (1.south west) to +(0,-0.7);
      \draw (1.north) to +(0,.7);
    \end{pic}\quad
    =\quad
    \begin{pic}[xscale=.75,yscale=.5]
      \draw (0,0) to +(0,2);
    \end{pic}\qquad
    \text{implies}\qquad
    \begin{pic}[xscale=.75,yscale=.5]
      \node[morphism] (1) at (0,0) {$\;H\;$};
      \draw (1.south east) to +(0,-0.7);
      \draw (1.south) to +(0,-0.7);
      \draw (1.south west) to +(0,-0.7);
      \draw (1.north) to +(0,.7);
    \end{pic}\quad
    =\quad
    \begin{pic}[xscale=.75,yscale=.5]
      \draw (0,-1) to +(0,2);
      \node[morphism] (1) at (1,0) {$\;i\;$};
      \draw (1.south east) to +(0,-0.75);
      \draw (1.south west) to +(0,-0.75);
    \end{pic}
    \]
    for some coisometry $i$. Therefore:
    \[
    \begin{pic}[xscale=.75,yscale=.5]
       \node[morphism] (f) at (0,-.5) {$\;P\;$};
       \draw (f.north) to +(0,.9);
       \draw (f.south west) to +(0,-.9);
       \draw (f.south east) to +(0,-.9);
     \end{pic}\quad
     =\quad
    \begin{pic}[xscale=.75,yscale=.5]
      \node[morphism] (1) at (0,-0) {$h$};
      \node[morphism] (2) at ([yshift=20]1.north) {$p$};
      \draw (1.north) to (2.south);
      \draw (1.south east) to +(0,-.5);
      \draw (1.south west) to +(0,-.5);
      \draw (2.north) to +(0,.5);
    \end{pic}\quad
     =\quad\begin{pic}[xscale=.75,yscale=.5]
      \node[morphism] (1) at (0,-0) {$\;H\;$};
      \node[morphism] (2) at ([yshift=20]1.north) {$p$};
      \draw (1.north) to (2.south);
      \draw (1.south east) to +(0,-0.2)node[ground,scale=-0.7]{};
      \draw (1.south) to +(0,-.5);
      \draw (1.south west) to +(0,-.5);
      \draw (2.north) to +(0,.5);
    \end{pic}\quad
     =\quad\begin{pic}[xscale=.75,yscale=.5]
       \node[morphism] (f) at (0,-.5) {$\;p\;$};
       \draw (f.north) to +(0,.9);
       \draw (f.south) to +(0,-.9);
      \node[morphism] (1) at (1.2,-0.5) {$\;i\;$};
      \draw (1.south east) to +(0,-0.3)node[ground,scale=-0.7]{};
      \draw (1.south west) to +(0,-0.9);
     \end{pic}
    \]
Now pick any state $a \colon \mathbb{C} \to A$ in $\cat{FHilb}$ that makes $i \circ (a \otimes \tinymix)$ nonzero.
  Any scalar in \cat{Quant} lies in the image of $F$, as do $P$ and $a$. Hence so does:
  \[\begin{pic}[scale=.75]
       \node[morphism] (f) at (0,-.5) {$\;P\;$};
       \node[morphism] (a) at ([yshift=-10]f.south east) {$a$};
       \draw (f.north) to +(0,.9);
       \draw (f.south west) to +(0,-.9);
       \draw (f.south east) to (a.north);
     \end{pic}\quad
     \left(\begin{pic}[scale=.75]
      \node[morphism] (1) at (1.2,-0.5) {$\;i\;$};
       \node[morphism] (a) at ([yshift=-10]1.south west) {$a$};
      \draw ([xshift=5]1.south east) to +(0,-0.3)node[ground,scale=-0.7]{};
      \draw (1.south west) to (a.north);
     \end{pic}\right)^{-1}
     \quad=\quad
     \begin{pic}[scale=.75]
       \node[morphism] (f) at (0,-.5) {$\;p\;$};
       \draw (f.north) to +(0,.9);
       \draw (f.south) to +(0,-.9);
     \end{pic}
     \]
     
  For the converse, we assume $p\in\cat{FHilb}$ and prove it pure. Suppose
  $(p \otimes \id) \circ a = b \circ (\id \otimes \tinyground)$.
    We want to find some $h$ satisfying $(*)$. Without loss of generality we may assume $b$ nonzero. 
    If $A$ and $B$ are Stinespring dilations of $a$ and $b$, then:
     \[
      \begin{pic}[xscale=.75,yscale=.5]
      \node[morphism] (1) at (0,-0) {$\;A\;$};
      \node[morphism] (2) at ([yshift=20]1.north west) {$p$};
      \draw (1.north west) to (2.south);
      \draw (1.south west) to +(0,-1.2);
      \draw (1.south east) to +(0,-0.3)node[right]{C} to +(0,-0.6)node[ground,scale=-0.7]{};
      \draw (1.north east) to +(0,1.47);
      \draw (2.north) to +(0,.5);
    \end{pic}
\quad    =\quad
    \begin{pic}[xscale=.75,yscale=.5]
      \node[morphism] (1) at (0,0) {$\;B\;$};
      \draw ([xshift=4]1.south east) to +(0,-0.3)node[right]{C''} to +(0,-0.6)node[ground,scale=-0.7]{};
      \draw ([xshift=-2]1.south) to +(0,-0.3)node[right]{C'} to +(0,-0.6)node[ground,scale=-0.7]{};
      \draw (1.south west) to +(0,-1.2);
      \draw (1.north east) to +(0,1.47);
      \draw (1.north west) to +(0,1.47);
    \end{pic}
    \]
    Pick such $A$ and $B$  with the further property that $\mathrm{dim}(C)\leq\mathrm{dim}(C')\mathrm{dim}(C'')$. This can be done by picking any $A$ and $B$ and then enlarging $C''$ by applying an arbitrary coisometry from a space with dimension greater than $\mathrm{dim}(C)/\mathrm{dim}(C')$. 
    Then Lemma~\ref{lem:minimalstinespring}\ref{lem:extenddilation} provides a coisometry $j$ satisfying
    $(p \otimes \id) \circ A \circ (\id \otimes j) = B$.
    Define $h = A \circ (\id \otimes j) \circ (\id \otimes \id \otimes \tinymix)$.
    Then
    \[
      \begin{pic}[xscale=.75,yscale=.5]
      \node[morphism] (1) at (0,0) {$\;h\;$};
      \node[morphism] (2) at ([yshift=20]1.north west) {$p$};
      \draw (1.north west) to (2.south);
      \draw (1.south east) to +(0,-1.47);
      \draw (1.south west) to +(0,-1.47);
      \draw (1.north east) to +(0,1.47);
      \draw (2.north) to +(0,0.5);
    \end{pic}\quad    =\quad
      \begin{pic}[xscale=.75,yscale=.5]
      \node[morphism] (1) at (0,-0) {$\;A\;$};
      \node[morphism] (3) at ([yshift=-20]1.south east) {$j$};
      \node[morphism] (2) at ([yshift=20]1.north west) {$p$};
      \draw (1.north west) to (2.south);
      \draw (2.north) to +(0,0.5);
      \draw (1.south west) to +(0,-1.47);
      \draw (1.south east) to (3.north);
      \draw (1.north east) to +(0,1.47);
      \draw (3.south east) to +(0,-0.2)node[ground,scale=-0.7]{};
      \draw (3.south west) to +(0,-0.5);
    \end{pic}   \quad    =\quad
    \begin{pic}[xscale=.75,yscale=.5]
      \node[morphism] (1) at (0,0) {$\;B\;$};
      \draw ([xshift=2]1.south east) to +(0,-0.5)node[ground,scale=-0.7]{};
      \draw ([xshift=-2]1.south) to +(0,-1.47);
      \draw (1.south west) to +(0,-1.47);
      \draw (1.north east) to +(0,1.47);
      \draw (1.north west) to +(0,1.47);
    \end{pic}\quad    =\quad
    \begin{pic}[xscale=.75,yscale=.5]
      \node[morphism] (1) at (0,0) {$\;b\;$};
      \draw (1.south east) to +(0,-1.47);
      \draw (1.south west) to +(0,-1.47);
      \draw (1.north east) to +(0,1.47);
      \draw (1.north west) to +(0,1.47);
    \end{pic}
    \]
    and
     \[
      \begin{pic}[xscale=.75,yscale=.5]
      \node[morphism] (1) at (0,0) {$\;h\;$};
      \draw (1.south east) to +(0,-0.5)node[ground,scale=-0.7]{};
      \draw (1.south west) to +(0,-1.47);
      \draw (1.north east) to +(0,0.5);
      \draw (1.north west) to +(0,0.5);
    \end{pic}\quad    =\quad
      \begin{pic}[xscale=.75,yscale=.5]
      \node[morphism] (1) at (0,-0) {$\;A\;$};
      \node[morphism] (3) at ([yshift=-20]1.south east) {$j$};
      \draw (1.north west) to +(0,0.5);
      \draw (1.south west) to +(0,-1.47);
      \draw (1.south east) to (3.north);
      \draw (1.north east) to +(0,0.5);
      \draw ([xshift=2]3.south east) to +(0,-0.2)node[ground,scale=-0.7]{};
      \draw ([xshift=-2]3.south west) to +(0,-0.2)node[ground,scale=-0.7]{};
    \end{pic}   \quad    =\quad
      \begin{pic}[xscale=.75,yscale=.5]
      \node[morphism] (1) at (0,-0) {$\;A\;$};
      \draw (1.north west) to +(0,0.5);
      \draw (1.south west) to +(0,-1.47);
      \draw (1.south east) to +(0,-0.2)node[ground,scale=-0.7]{};
      \draw (1.north east) to +(0,0.5);
    \end{pic}   \quad    =\quad
      \begin{pic}[xscale=.75,yscale=.5]
      \node[morphism] (1) at (0,-0) {$\;a\;$};
      \draw (1.north west) to +(0,0.5);
      \draw (1.south) to +(0,-1.47);
      \draw (1.north east) to +(0,0.5);
    \end{pic}   
    \]
    as required.
\end{proof}

\begin{lemma}
  The symmetric monoidal subcategory $\mathcal M$ consists of only the simple mixing maps.
\end{lemma}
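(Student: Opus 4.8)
The plan is to prove that $\mathcal{M}$ equals the class of simple mixing maps by two inclusions. That every simple mixing map lies in $\mathcal{M}$ is immediate from Section~\ref{sec:purity}: $\mathrm{id}\otimes\tinymix$ always lies in $\mathcal{M}$, every isomorphism lies in $\mathcal{M}$, and $\mathcal{M}$ is a subcategory, so any composite $p\circ(\mathrm{id}_A\otimes\tinymix_C)$ with $p$ invertible lies in $\mathcal{M}$. All the substance is in the converse: if $m\colon X\to Y$ lies in $\mathcal{M}$ then $m$ is a simple mixing map, and for this I would show that a \emph{minimal} Stinespring dilation of $m$ is actually an isomorphism in \cat{Quant}.

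So let $m\in\mathcal{M}$, and assume $m\neq0$ (if $m=0$, applying $(\dagger\dagger)$ to the trivial dilation with empty ancilla forces $Y=0$, whence $m$ trivially has the form \eqref{eq:simplemixing}). By Lemma~\ref{lem:minimalstinespring}\ref{lem:minimal:exists} fix a minimal Stinespring dilation $P$ of $m$ as in \eqref{eq:stinespring}, and write $P=F(V)$ with $V\colon X\otimes C\to Y$ linear, so that $m(\rho)=V(\rho\otimes\mathrm{id}_C)V^\dagger$ for operators $\rho$ on $X$. Put $V_c:=V\circ(\mathrm{id}_X\otimes|c\rangle)$ for an orthonormal basis $\{|c\rangle\}$ of $C$; minimality, as captured by Lemma~\ref{lem:minimalstinespring}\ref{lem:minimal:epi}, amounts to the family $\{V_c\}_c$ being linearly independent. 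As $m$ is mixing it obeys $(\dagger\dagger)$; applying this to the pure morphism $p:=P$ --- legitimate since $m=P\circ(\mathrm{id}_X\otimes\tinymix_C)$ --- produces a completely positive $h\colon Y\to X\otimes C$ with $P\circ h=\mathrm{id}_Y$ and $h\circ m=\mathrm{id}_X\otimes\tinymix_C$.

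From these two identities I would extract a two-sided linear inverse of $V$. Fix operators $\{h_k\}$ with $h(\sigma)=\sum_k h_k\sigma h_k^\dagger$. Since $P\circ h=\mathrm{id}_Y$, the operators $\{V h_k\}_k$ are Kraus operators for the identity channel on $Y$, so each $Vh_k$ is a scalar multiple of $\mathrm{id}_Y$; writing $Vh_{k_0}=\mu\,\mathrm{id}_Y$ with $\mu\neq0$ and setting $U:=\mu^{-1}h_{k_0}$ gives $V\circ U=\mathrm{id}_Y$. Next, $h\circ m=\mathrm{id}_X\otimes\tinymix_C$ reads $\sum_{k,c}(h_k V_c)\rho(h_k V_c)^\dagger=\rho\otimes\mathrm{id}_C=\sum_{c'}(\mathrm{id}_X\otimes|c'\rangle)\rho(\mathrm{id}_X\otimes|c'\rangle)^\dagger$ for all operators $\rho$ on $X$; since the operators $\mathrm{id}_X\otimes|c'\rangle$ are linearly independent, comparing these two Kraus decompositions of the same channel shows each $h_k V_c$ lies in their span, i.e.\ $h_k V_c=\mathrm{id}_X\otimes|w_{k,c}\rangle$ for some $|w_{k,c}\rangle\in C$. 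In particular $U\circ V_c=\mathrm{id}_X\otimes|\tilde w_c\rangle$ for each $c$, so $U\circ V=\mathrm{id}_X\otimes T$ where $T\colon C\to C$ is the linear map sending $|c\rangle\mapsto|\tilde w_c\rangle$.

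Finally, $V\circ U=\mathrm{id}_Y$ gives $V=V\circ(U\circ V)=V\circ(\mathrm{id}_X\otimes T)$, which on basis vectors reads $V_c=\sum_{c'}T_{c'c}V_{c'}$ for every $c$; linear independence of $\{V_c\}$ forces $T=\mathrm{id}_C$, hence $U\circ V=\mathrm{id}_{X\otimes C}$. So $V$ is a linear isomorphism, $P=F(V)$ is an isomorphism in \cat{Quant} with inverse $F(V^{-1})$, and $m=P\circ(\mathrm{id}_X\otimes\tinymix_C)$ now exhibits $m$ in the form \eqref{eq:simplemixing} with an isomorphism, i.e.\ as a simple mixing map. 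I expect the main obstacle to be the middle step: each of the two identities coming from $(\dagger\dagger)$ is weak on its own, and the work is to combine them so that $U\circ V$ is pinned to the block form $\mathrm{id}_X\otimes T$; only after that does minimality of the dilation (linear independence of the $V_c$) collapse $T$ to the identity and finish the proof.
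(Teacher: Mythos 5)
Your proof is correct, and it shares its skeleton with the paper's: both take a minimal Stinespring dilation of $m$, feed that dilation into $(\dagger\dagger)$ as the pure map $p$ to obtain $h$ with $p\circ h=\id$ and $h\circ m=\id\otimes\tinymix$, and then combine these two identities with minimality to show the dilation is invertible. Where you diverge is in how that last step is executed. The paper stays diagrammatic: it takes a Stinespring dilation $H$ of $h$, invokes Lemma~\ref{lem:minimalstinespring} twice to replace the discarded ancillas by coisometries $j$ and $j'$, cancels the minimal dilation to conclude $j=\id\otimes j'$, and exhibits the inverse as $H\circ(\id\otimes j'^\dagger)$. You instead drop to Kraus operators: $P\circ h=\id_Y$ forces each $Vh_k$ to be a multiple of $\id_Y$ (giving a right inverse $U$ of $V$), $h\circ m=\id_X\otimes\tinymix_C$ forces each $h_kV_c$ into the span of the minimal Kraus family $\{\id_X\otimes|c'\rangle\}$ (pinning $UV$ to the block form $\id_X\otimes T$), and linear independence of the $V_c$ --- your reading of minimality via Lemma~\ref{lem:minimalstinespring}\ref{lem:minimal:epi} --- collapses $T$ to the identity. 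The two executions rest on the same underlying facts (uniqueness of Kraus/Stinespring decompositions up to coisometry), so neither is more general; yours is more elementary and self-contained at the linear-algebra level, and it makes explicit why both equations from $(\dagger\dagger)$ are needed, while the paper's stays within the diagrammatic calculus it is advertising and never names a Kraus operator. Your handling of the degenerate case $m=0$ and your closure argument for the easy inclusion (isomorphisms and $\id\otimes\tinymix$ lie in $\mathcal M$, which is closed under composition) are both fine.
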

\begin{proof}
  Let $m\in\mathcal M$, and let $M$ be a minimal Stinespring dilation of $m$. 
  We will exhibit an inverse of $M$. By Lemma~\ref{lem:quantum:pure}, $M$ is pure, so  $(\dagger\dagger)$ provides $h$ satisfying:
  \[
  \begin{pic}[xscale=.75,yscale=.5]
      \node[morphism] (2) at (0,0) {$m$};
      \node[morphism] (1) at ([yshift=20]1.north) {$h$};
      \draw (2.south) to +(0,-0.7);
      \draw (2.north) to (1.south);
      \draw (1.north west) to +(0,0.7);
      \draw (1.north east) to +(0,0.7);
    \end{pic}\quad
    =\quad
  \begin{pic}[xscale=.75,yscale=.5]
    \draw (0,0) to (0,-2.9);
    \draw (0.5,0) to +(0,-0.5)node[ground,scale=-0.7]{};
  \end{pic}\qquad\text{and}\qquad
  \begin{pic}[xscale=.75,yscale=.5]
      \node[morphism] (2) at (0,0) {$h$};
      \node[morphism] (1) at ([yshift=20]2.north) {$M$};
      \draw (2.south) to +(0,-0.7);
      \draw (1.north) to +(0,0.7);
      \draw ([xshift=2]1.south west) to ([xshift=-2]2.north west);
      \draw ([xshift=-2]1.south east) to ([xshift=2]2.north east);
  \end{pic}\quad
  =\quad
  \begin{pic}[xscale=.75,yscale=.5]
    \draw (0,0) to (0,-2.9);
  \end{pic}
  \]
    Let $H$ be a Stinespring dilation of $h$.
    \[
    \begin{pic}[xscale=.75,yscale=.5]
      \node[morphism] (2) at (0,0) {$M$};
      \node[morphism] (1) at ([yshift=20,xshift=8.4]2.north) {$H$};
      \draw (2.south west) to +(0,-0.7);
      \draw (2.south east) to +(0,-0.3)node[ground,scale=-0.7]{};
      \draw ([xshift=5]1.south east) to +(0,-0.3)node[ground,scale=-0.7]{};
      \draw (2.north) to (1.south west);
      \draw (1.north west) to +(0,0.7);
      \draw (1.north east) to +(0,0.7);
    \end{pic}\quad
    =\quad
    \begin{pic}[xscale=.75,yscale=.5]
      \draw (0,0) to (0,-2.9);
      \draw (0.5,0) to +(0,-0.5)node[ground,scale=-0.7]{};
    \end{pic}\qquad\text{and}\qquad
    \begin{pic}[xscale=.75,yscale=.5]
      \node[morphism] (2) at (0,0) {$H$};
      \node[morphism] (1) at ([yshift=20]2.north) {$M$};
      \draw (2.south west) to +(0,-0.7);
      \draw (2.south east) to +(0,-0.3)node[ground,scale=-0.7]{};
      \draw (1.north) to +(0,0.7);
      \draw ([xshift=1]1.south west) to (2.north west);
      \draw (1.south east) to ([xshift=1]2.north east);
    \end{pic}\quad
    =\quad
    \begin{pic}[xscale=.75,yscale=.5]
      \draw (0,0) to (0,-2.9);
    \end{pic}
    \]
    By Lemma~\ref{lem:minimalstinespring}\ref{lem:minimal:epi} the right hand side of each of these equations is a minimal Stinespring dilation, so there are coisometries $j$ and $j'$ satisfying:
    \[
    \begin{pic}[xscale=.75,yscale=.5]
      \node[morphism] (2) at (0,0) {$M$};
      \node[morphism] (1) at ([yshift=20,xshift=8.4]2.north) {$H$};
      \draw (2.south west) to +(0,-0.7);
      \draw (2.south east) to +(0,-0.7);
      \draw ([xshift=5]1.south east) to +(0,-1.67);
      \draw (2.north) to (1.south west);
      \draw (1.north west) to +(0,0.7);
      \draw (1.north east) to +(0,0.7);
    \end{pic}\quad
    =\quad
    \begin{pic}[xscale=.75,yscale=.5]
      \draw (0,0) to (0,-2.9);
    \end{pic}\quad
    \begin{pic}[xscale=.75,yscale=.5]
      \node[morphism] (j) at (1,-1.5) {$j$};
      \draw (j.north) to +(0,1.2);
      \draw (j.south east) to +(0,-1.2);
      \draw (j.south west) to +(0,-1.2);
    \end{pic}\qquad\text{and}\qquad
    \begin{pic}[xscale=.75,yscale=.5]
      \node[morphism] (2) at (0,0) {$H$};
      \node[morphism] (1) at ([yshift=20]2.north) {$M$};
      \draw (2.south west) to +(0,-0.7);
      \draw (2.south east) to +(0,-0.7);
      \draw (1.north) to +(0,0.7);
      \draw ([xshift=1]1.south west) to (2.north west);
      \draw (1.south east) to ([xshift=1]2.north east);
    \end{pic}\quad
    =\quad
    \begin{pic}[xscale=.75,yscale=.5]
      \draw (0,-1.45) to (0,1.45);
      \node[morphism] (j) at (1,0) {$j'$};
      \draw (j.south) to +(0,-1.2);
    \end{pic}
    \]
    Hence
    \[
    \begin{pic}[xscale=.75,yscale=.5]
      \node[morphism] (2) at (0,0) {$j$};
      \node[morphism] (1) at ([yshift=30,xshift=-9.3]2.north) {$M$};
      \draw (2.south west) to +(0,-1.17);
      \draw (2.south east) to +(0,-1.17);
      \draw ([xshift=0]1.south west) to +(0,-2.5);
      \draw (2.north) to (1.south east);
      \draw (1.north) to +(0,0.7);
    \end{pic}\quad=\quad
    \begin{pic}[xscale=.75,yscale=.5]
      \node[morphism] (2) at (0,0) {$M$};
      \node[morphism] (1) at ([yshift=20,xshift=8.4]2.north) {$H$};
      \node[morphism] (3) at ([yshift=20]1.north) {$M$};
      \draw (2.south west) to +(0,-0.7);
      \draw (2.south east) to +(0,-0.7);
      \draw ([xshift=5]1.south east) to +(0,-1.67);
      \draw (2.north) to (1.south west);
      \draw (3.north) to +(0,0.7);
      \draw ([xshift=-1]1.north west) to (3.south west);
      \draw (1.north east) to ([xshift=-1]3.south east);
    \end{pic}\quad=\quad
    \begin{pic}[xscale=.75,yscale=.5]
      \node[morphism] (2) at (0,0) {$M$};
      \draw (2.south west) to +(0,-1.7);
      \draw (2.south east) to +(0,-1.7);
      \draw (2.north) to +(0,1.7);
      \node[morphism] (j) at (1.3,0) {$j'$};
      \draw (j.south) to +(0,-1.7);
    \end{pic}
    \]
  Lemma~\ref{lem:minimalstinespring}\ref{lem:minimal:epi} lets us remove $M$ to see $j = \id \otimes j'$.
  It follows that $H \circ (\id \otimes j'^\dag)$
  is the inverse of $M$.
\end{proof}
  
Since the category of completely positive maps is self-dual in a way that fixes the inclusion of \cat{FHilb}, the collection of copure morphisms is precisely the same as the collection of pure ones. This coincidence is special to the quantum case. The symmetric monoidal subcategory $\mathcal D$ consists precisely of the simple discarding maps. 
  
\begin{theorem}
  The category \cat{Quant} has purification and copurification. Hence the pairs $(\mathcal M, \mathcal P)$ and $(\mathcal C, \mathcal D)$ are symmetric monoidal weak factorisation systems.
\end{theorem}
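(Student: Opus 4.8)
The plan is to treat this theorem as the payoff of the work already done: purification will drop out of Stinespring's theorem once we know which morphisms are pure, copurification will follow by self-duality, and the two ``hence'' clauses are then formal, using the observations recorded in Sections~\ref{sec:factorisation}--\ref{sec:lifting}.

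For purification I would argue directly from Theorem~\ref{thm:stinespring}: it asserts that every completely positive map can be written in the form~\eqref{eq:stinespring}, which is exactly the shape~\eqref{eq:simplemixing}, with $p$ a morphism in the image of $F\colon\cat{FHilb}\to\cat{Quant}$; and Lemma~\ref{lem:quantum:pure} identifies that image with the class $\mathcal P$. Hence every morphism of \cat{Quant} is of the form~\eqref{eq:simplemixing} for some pure $p$, i.e.\ \cat{Quant} has purification. For copurification I would invoke the self-duality of \cat{Quant}: the dagger-compact structure on \cat{FHilb} induces an equivalence $\cat{Quant}\simeq\cat{Quant}^{\mathrm{op}}$ that fixes the image of $F$ and swaps the completely mixed states $\tinymix$ with the discarding effects $\tinyground$. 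Under this equivalence the data defining $(\mathcal C,\mathcal D)$ is carried to that defining $(\mathcal P,\mathcal M)$, and condition~\eqref{eq:simplediscarding} to condition~\eqref{eq:simplemixing}; so copurification of \cat{Quant} is precisely purification of $\cat{Quant}^{\mathrm{op}}\simeq\cat{Quant}$, which we have just established. This is the same self-duality already used to identify $\mathcal C$ with $\mathcal P$ and $\mathcal D$ with the simple discarding maps.

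Finally, for the ``hence'': since $\mathrm{id}\otimes\tinymix$ lies in $\mathcal M$ for every object, \eqref{eq:simplemixing} exhibits each morphism as $p\circ m$ with $m\in\mathcal M$ and $p\in\mathcal P$, so $(\mathcal M,\mathcal P)$ is a factorisation system; and because $\mathcal P=\Rlift{\mathcal M'}$ and $\mathcal M=\Llift{\mathcal P}$ by construction, the Galois connection between $\Llift{(-)}$ and $\Rlift{(-)}$ (alternating applications stabilise) forces $\mathcal P=\Rlift{\mathcal M}$ as well, so the pair is a symmetric monoidal weak factorisation system; the dual argument handles $(\mathcal C,\mathcal D)$. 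I do not anticipate a genuine obstacle, since all of the real work is in Lemma~\ref{lem:minimalstinespring} and Lemma~\ref{lem:quantum:pure}; the only step needing care is making the self-duality argument explicit enough that it visibly fixes $F(\cat{FHilb})$ and exchanges $\tinymix$ with $\tinyground$, which is exactly what makes the copure/discarding story the mirror of the pure/mixing one.
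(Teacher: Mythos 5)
Your proposal is correct and follows essentially the same route as the paper, which simply cites Theorem~\ref{thm:stinespring} together with Lemma~\ref{lem:quantum:pure} for purification and takes duals for copurification; the ``hence'' clause is justified exactly as you say, via the observation that $\mathrm{id}\otimes\tinymix\in\mathcal M$ and the Galois-connection stabilisation already recorded in Sections~\ref{sec:purity} and~\ref{sec:lifting}. Your version merely makes explicit the details the paper leaves implicit.
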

\begin{proof}
  Purification follows from Lemma~\ref{lem:quantum:pure} and Theorem~\ref{thm:stinespring}, copurification by taking duals.
\end{proof}

\section{Related work}\label{sec:related}

Chiribella~\cite{chiribella:pure} defines a map $p$ to be pure if:
\[
    \begin{pic}[xscale=.75,yscale=.5]
      \node[morphism] (2) at ([yshift=20]1.north west) {$p$};
      \draw (2.south) to +(0,-0.7);
      \draw (2.north) to +(0,.7);
    \end{pic}
    =
    \begin{pic}[xscale=.75,yscale=.5]
      \node[morphism] (1) at (0,0) {$b$};
      \draw (1.north east) to +(0,0.3)node[ground,scale=0.7]{};
      \draw (1.north west) to +(0,.7);
      \draw (1.south) to +(0,-.7);
    \end{pic}
    \qquad \implies \qquad
    \text{ $b$ is of the form }\;
    \begin{pic}[xscale=.75,yscale=.5]
      \node[morphism] (1) at (0,-0) {$h$};
      \node[morphism] (2) at (-1,0) {$p$};
      \draw (2.south) to +(0,-.5);
      \draw (1.north) to +(0,.5);
      \draw (2.north) to +(0,.5);
    \end{pic}
\]
Selby and Coecke~\cite{selbycoecke:leaks} define a map $p$ to be pure if:
\[
    \begin{pic}[xscale=.75,yscale=.5]
      \node[morphism] (2) at ([yshift=20]1.north west) {$p$};
      \draw (2.south) to +(0,-0.7);
      \draw (2.north) to +(0,.7);
    \end{pic}
    =
    \begin{pic}[xscale=.75,yscale=.5]
      \node[morphism] (1) at (0,0) {$b$};
      \draw (1.north east) to +(0,0.3)node[ground,scale=0.7]{};
      \draw (1.north west) to +(0,.7);
      \draw (1.south) to +(0,-.7);
    \end{pic}
    \qquad \implies \qquad\exists h,h':\quad
    \begin{pic}[xscale=.75,yscale=.5]
      \node[morphism] (1) at (0,0) {$b$};
      \draw (1.north east) to +(0,0.7);
      \draw (1.north west) to +(0,.7);
      \draw (1.south) to +(0,-.7);
    \end{pic}
    =
    \begin{pic}[xscale=.75,yscale=.5]
      \node[morphism] (1) at (0,-0) {$h$};
      \node[morphism] (2) at ([yshift=-20]1.south) {$p$};
      \draw (1.south) to (2.north);
      \draw (1.north east) to +(0,.5);
      \draw (1.north west) to +(0,.5);
      \draw (2.south) to +(0,-.5);
    \end{pic}
    =
    \begin{pic}[xscale=.75,yscale=.5]
      \node[morphism] (2) at (0,0) {$h'$};
      \node[morphism] (1) at ([yshift=20,xshift=-3]1.north west) {$p$};
      \draw (1.south) to (2.north west);
      \draw (1.north) to +(0,.5);
      \draw (1.north) to +(0,.5);
      \draw (2.south) to +(0,-.5);
      \draw ([xshift=2]2.north east) to +(0,1.47);
    \end{pic}
    \quad\text{and}\quad
    \begin{pic}[xscale=.75,yscale=.5]
      \node[morphism] (1) at (0,0) {$h$};
      \draw (1.north east) to +(0,0.3)node[ground,scale=0.7]{};
      \draw (1.north west) to +(0,0.7);
      \draw (1.south) to +(0,-.7);
    \end{pic}
    =
    \begin{pic}[xscale=.75,yscale=.5]
      \draw (0,0) to +(0,2);
    \end{pic}
     \quad\text{and}\quad
    \begin{pic}[xscale=.75,yscale=.5]
      \node[morphism] (1) at (0,0) {$h'$};
      \draw (1.north east) to +(0,0.3)node[ground,scale=0.7]{};
      \draw (1.north west) to +(0,0.7);
      \draw (1.south) to +(0,-.7);
    \end{pic}
    =
    \begin{pic}[xscale=.75,yscale=.5]
      \draw (0,0) to +(0,2);
    \end{pic}
\]
Both definitions yield the same pure subcategory in \cat{Quant}, but they disagree in the other cases. For example, consider \cat{FRel}. According to $(*)$ the pure morphisms in \cat{FRel} are the partial functions. According to Selby and Coecke's definition the collection of pure morphisms is smaller; only the partial injections. The collection is smaller yet under Chiribella's definition; only the relations which relate at most one pair of elements. So these definitions produce a stricter notion of purity than that of this paper.

Another difference is that these definitions define purity in relation to the discarding map $\,\tinyground$ rather than the completely mixed state $\,\tinymix$. This means that the resulting collection of pure maps often bares some relation to the maps which we would call copure. This can also be seen in \cat{FRel}; the maps which are pure under Selby and Coecke's definition are precisely those in $\mathcal P\cap \mathcal C$. 

Finally, neither of these two definitions guarantee that the collection of pure maps form a monoidal subcategory. We have already seen this in the case of Chiribella's definition, which says that the identity morphisms are not pure in \cat{FRel}. To see this for Selby and Coecke's definition consider the category given by the usual order on the positive integers, with $\times$ as the monoidal product and the morphism $(n\geq 1)$ as \tinyground. Then a morphism $(n\geq m)$ is pure iff $m>n/2$. Such morphisms are not closed under $\circ$ or $\otimes$. Of course $\mathcal P$ as defined by $(*)$ is always a monoidal category, because it is of the form $\Rlift{\mathcal A}$.

\bibliographystyle{eptcs}
\bibliography{mixing}

\appendix
\section{Proofs}\label{sec:proofs}

\begin{proof}[Proof of Lemma~\ref{lem:probabilistic:system}.]
  Any matrix $f:I\to J$ of nonnegative reals factors as follows. The matrix represented by
  $\id[I] \otimes \tinymix_J$
  has exactly one nonzero entry in each row and at least one nonzero entry in each column. Let $p:I\times J\to J$ be the matrix with $((i,j),j')$-entry $\delta_{jj'}f_{ij}$; this has at most one nonzero entry in each column. Then $f$ is given by:
  \begin{equation}\label{eq:matrix}
    \begin{pic}[xscale=.75,yscale=.5]
      \node[morphism] (a) at (0,.5) {$\;p\;$};
      \draw (a.north) to +(0,.5)node[left]{$J$};
      \draw (a.south west) to +(0,-.9)node[left]{$I$};
      \draw (a.south east) to +(0,-.3)node[right]{$J$} to +(0,-.6)node[ground,scale=-.75]{};
    \end{pic}
  \end{equation}
  So $(\mathcal L,\mathcal R)$ is a factorisation system.
  It remains to prove that $m\oslash p$ for $m\in\mathcal L$, $p\in\mathcal R$.
  It suffices to show $m\lift p$, because if $p\in\mathcal L$ then $p\times \mathrm{id}_A$ is also in $\mathcal L$, and similarly for $m\in\mathcal R$. 
  If $b \circ m = p \circ a$, 
  then we can factor $a$ and $b$ and look for $h$ making the diagram below left commute.
  So it suffices to prove that any commuting square on the right below has such an $h$. 
  \[
   \begin{tikzcd}
    \bullet\arrow{r}{m_a}\arrow{d}[swap]{m}&\bullet\arrow{r}{p_a}&\bullet\arrow{d}{p}\\
    \bullet\arrow{r}[swap]{m_b}&\bullet\arrow{u}[swap]{h}\arrow{r}[swap]{p_b}&\bullet
   \end{tikzcd}
   \qquad\qquad
   \begin{tikzcd}
    W\arrow{r}{m'}\arrow{d}[swap]{m}&Y\arrow{d}{p'}\\
    X\arrow{r}[swap]{p}\arrow[dashed]{ur}{h}&Z
   \end{tikzcd}
  \]
  Define $h$ as follows. For each $y\in Y$ with $p'_{zy}=0$ for all $z\in Z$, take the unique $w\in W$ with $m'_{yw}\neq 0$ and pick $x\in X$ with $m_{xw}\neq 0$, then set $h_{yx}=m'_{yw}/m_{xw}$. Now for each $w\in W$ and $z\in Z$ say $x\in X_{wz}\subseteq X$ if $m_{xw}\neq 0$ and $p_{ zx}\neq 0$, and similarly $y\in Y_{wz}\subseteq Y$ if $m'_{yw}\neq 0$ and $p'_{ zy}\neq 0$. Then since $pm=p'm'$,
  \[
    \sum_{x\in X_{xz}}p_{zx}m_{xw}=  \sum_{y\in Y_{xz}}p'_{zx}m'_{xw}\text.
  \]
  Call this quantity $c_{wz}$, and for each $x\in X_{wz}$, $y\in Y_{wz}$ define $h_{yx}=p_{zx}m'_{yw}/c_{wz}$. Let $h_{yx}=0$ if its value has not already been defined. Then $hm=m'$ and $p'h=p$, as required.
\end{proof}

\begin{proof}[Proof of Lemma~\ref{lem:possibilistic:system}.]
Any relation $r\colon A \to B$ factors as the introduction of a completely mixed state on $B$, given by the relation $A \to A \times B$ that relates $a \in A$ to $(a,b)$ for any $b \in B$, followed by a partial function $A \times B \to B$ that sends $(a,b)$ to $b$ when $(a,b) \in r$. Since the introduction of a completely mixed state is injective, surjective and total, we have that $(\mathcal L,\mathcal R)$ is a factorisation system.

It remains to prove that $m\oslash p$ for $m\in\mathcal L$, $p\in\mathcal R$. By the same logic as used in the proof of Lemma~\ref{lem:probabilistic:system} it suffices to prove that any commuting square
  \[
\begin{tikzcd}
  W\arrow{r}{m'}\arrow{d}[swap]{m}&Y\arrow{d}{p'}\\
  X\arrow{r}[swap]{p}\arrow[dashed]{ur}{h}&Z
\end{tikzcd}
\]
  has such an $h$. Let $h$ be the relation containing $(x,y)\in X\times Y$ if both $x$ and $y$ are related to the same $w\in W$ and either $x$ and $y$ are related to the same $z\in Z$ or $y$ is not related to any $z\in Z$.
\end{proof}

\begin{proof}[Proof of Lemma~\ref{lem:minimalstinespring}.]
  Parts~\ref{lem:minimal:exists} and~\ref{lem:minimal:epi} are well-known~\cite{westerbaan:paschke}. To prove~\ref{lem:extenddilation} it is easier to work in the dual case. Suppose $(\id \otimes \tinyground) \circ p = (\id \otimes \tinyground) \circ p'$.
 Then~\ref{lem:minimal:exists} gives a minimal Stinespring dilation $\tilde{p}$ with ancilla $\tilde{C}$ and isometries $i$ and $i'$ such that $p = (\id \otimes i) \circ \tilde{p}$ and $p' = (\id \otimes i') \circ \tilde{p}$.
 It suffices to exhibit an isometry $i'' \colon C\to C'$ with $i''\circ i=i'$. To construct $i''$, note that the space $C$ can be written as $\mathrm{im}(i)\oplus\mathrm{im}(i)^\perp$ and the space $C'$ can be written as $\mathrm{im}(i)\oplus\mathrm{im}(i')^\perp$. Since $\mathrm{im}(i)$ and $\mathrm{im}(i')$ are isometric copies of $\tilde{C}$, there is a canonical unitary $u$ between them. Pick an isometry $\mathrm{im}(i)^\perp \to \mathrm{im}(i')^\perp$ (which exists since $\mathrm{dim}(\mathrm{im}(i')^\perp)\geq\mathrm{dim}(\mathrm{im}(i)^\perp)$) and let $i''$ be the direct sum of $u$ and this isometry.
\end{proof}
\end{document}